\crefname{lemma}{Lemma}{Lemmas}
\crefname{figure}{Figure}{Figures}
\crefname{problem}{Problem}{Problems}
\crefname{proposition}{Proposition}{Propositions}
\crefname{theorem}{Theorem}{Theorems}
\crefname{remark}{Remark}{Remarks}
\crefname{example}{Example}{Examples}
\crefname{table}{Table}{Tables}
\crefname{corollary}{Corollary}{Corollaries}
\theoremstyle{definition}
\newtheorem{theorem}{Theorem}
\newtheorem{lemma}{Lemma}
\newtheorem{definition}{Definition}
\newtheorem{corollary}{Corollary}
\newtheorem{example}{Example}
\newtheorem{remark}{Remark}
\newtheorem{conjecture}{Conjecture}
\newtheorem*{remark*}{Remark}
\newtheorem*{theorem*}{Theorem}
\newtheorem*{problem*}{Problem}
\newcommand{\ceil}[1]{\left\lceil {#1} \right\rceil}
\newcommand{\floor}[1]{\left\lfloor {#1} \right\rfloor}
\DeclareMathOperator{\Vol}{Vol}
\newcommand{\E}{\mathbb{E}}
\newcommand{\cA}{{\cal A}}
\title{The Labeled Coupon Collector Problem}
 \author{\IEEEauthorblockN{Andrew Tan\IEEEauthorrefmark{1}, Oriel Limor\IEEEauthorrefmark{4}, Daniella Bar-Lev\IEEEauthorrefmark{1}, Ryan Gabrys \IEEEauthorrefmark{3}, Zohar Yakhini\IEEEauthorrefmark{2}\IEEEauthorrefmark{4}, and Paul H. Siegel\IEEEauthorrefmark{1}}
\thanks{{The work of O. Limor and Z. Yakhini was funded by the European Union (DiDAX, 101115134). Views and opinions expressed are however those of the author(s) only and do not necessarily reflect those of the European Union or the European Research Council Executive Agency. Neither the European Union nor the granting authority can be held responsible for them. The work of D. Bar-Lev was supported in part by Schmidt Sciences. The work of D. Bar-Lev, R. Gabrys, A. Tan, and P. H. Siegel was supported in part by NSF Grant CCF-2212437}}
% % Get acknowledgement for Andrew, Paul, and Ryan  
\IEEEauthorblockA{\IEEEauthorrefmark{1}\textit{Center for Memory and Recording Research, ECE Department, University of California, San Diego}} \IEEEauthorblockA{\IEEEauthorrefmark{2}\textit{School of Computer Science, Reichman University, Herzliya, Israel}} \IEEEauthorblockA{\IEEEauthorrefmark{3}\textit{Qualcomm Institute, Calit2, University of California, San Diego}} \IEEEauthorblockA{\IEEEauthorrefmark{4}\textit{Faculty of Computer Science, Technion -- Israel Institute of Technology, Haifa, Israel}} Email: \{a2tan, dbarlev, rgabrys, psiegel\}@ucsd.edu,   oriel.limor@cs.technion.ac.il, zohar.yakhini@runi.ac.il }
\begin{document}

\maketitle

% ---- Abstract ---- %
\begin{abstract}

%THIS PAPER IS ELIGIBLE FOR THE STUDENT PAPER AWARD.

We generalize the well-known Coupon Collector Problem (CCP) in combinatorics. Our problem is to find the minimum and expected number of draws, with replacement, required to recover $n$ distinctly labeled coupons, with each draw consisting of a random subset of $k$ different coupons and a random ordering of their associated labels. We specify two variations of the problem, Type-I in which the set of labels is known at the start, and Type-II in which the set of labels is unknown at the start. We show that our problem can be viewed as an extension of the separating system problem introduced by R\'enyi and Katona, provide a full characterization of the minimum, and provide a numerical approach to finding the expectation using a Markov chain model, with special attention given to the case where two coupons are drawn at a time.

\end{abstract}

% ---- Introduction ---- %
\section{Introduction}
\label{sec:introduction}

The Coupon Collector Problem (CCP) is a well-known combinatorial problem with broad applications. In the classical setting, there are $n$ distinct coupons, and the problem is to find the expected number of draws needed to collect all $n$ coupons, where in each draw one coupon is drawn independently and uniformly at random with replacement. The answer can be shown to be $nH_n$, where $H_n :=  \sum\limits_{j=1}^n \frac{1}{j}$ denotes the $n$-th harmonic number~\cite{Feller}.
A notable extension of the CCP is the group drawing scenario, where $k$ different coupons are drawn at a time for some fixed constant $k < n$ \cite{BT22, JS10, Stadje, AR01}. For $k = 2$, it was shown that the expected number of draws is  $\frac{1}{2} n H_n + O(1)$.

In this paper, we introduce a novel extension of the group drawing scenario, where each of the $n$ coupons is associated with a distinct label, and each draw, with replacement, yields a random subset of $k$ different coupons as well as a random ordering of their corresponding labels. The collector's goal is to correctly identify the label associated with each coupon, and so we call this extension the \textit{Labeled Coupon Collector Problem} (\textit{LCCP}). We consider two variations of the problem: \textit{Type-I}, where the collector knows the set of labels in advance, and \textit{Type-II}, where the collector does not. %In both settings, we seek to find the \textit{minimum} and \textit{expected} number of draws. 

As an example, consider the LCCP with four coupons $\{1,2,3,4\}$ and corresponding labels $\{A,B,C,D\}$, and with $k=2$. One draw might return coupons $\{1, 2\}$ with labels $\{A, B\}$. Then, a subsequent draw returning coupons $\{2,3\}$ and labels $\{B,C\}$ would allow the collector to deduce the correspondences $1 \leftrightarrow A, 2 \leftrightarrow B, 3 \leftrightarrow C$. Moreover, in the Type-I setting, the collector could further conclude via elimination that $4\leftrightarrow D$. Thus, the LCCP generalizes the classical CCP and the group drawing extension, combining elements of both collection of and deduction from unordered sets, and presents potential applications in a variety of fields, such as network communications \cite{Li,Lee}, cryptography \cite{Goyal,Beunardeau}, group testing \cite{Alridge,Chan}, and data storage \cite{Lenz}.

%As an example, consider a fiber network of $n$ remote sensors connected to a central office, where each sensor is associated with a distinct, unknown wavelength. The central office has an optical combiner that merges $k$ fibers at a time, and so the problem of finding the number of queries required to identify each sensor's wavelength becomes equivalent to the ${\text{Type-II}}$ LCCP. \at{Need application with reference}

Our study aims to analyze the minimum\footnote{The minimum number of draws for the LCCP is not trivial, unlike for the CCP ($n$ draws) and group drawing scenario ($\ceil{\frac{n}{k}}$ draws).} and the expected number of draws in the LCCP. A closely related problem on separating systems for finite sets was first proposed and studied by R\'enyi~\cite{Renyi61,Renyi62} and Katona~\cite{Katona}. In particular,~\cite{Katona} reduces the combinatorial problem of finding the ${\text{Type-I}}$ minimum to an analytical problem, yielding upper and lower bounds on the minimum; these bounds were later improved by Wegener~\cite{Wegener}. In our work, we provide a more complete characterization of the minimum number of draws as a function of $n$ and $k$ and address the problem of finding the expected number of draws. Finally, we note that a recent parallel work~\cite{BYYB2025} extends the study of the LCCP to \textit{partial recovery}, where the goal is to determine the labels either of a specific subset of coupons or of any subset of coupons of a specified size, and to the \textit{heterogeneous setting}, where the size of the draws varies randomly. %There, they used the same Markov approach as will be discussed here to establish preliminary results on the expected number of draws in the extended scenarios. 

The remainder of this paper is organized as follows. In Section II, we introduce basic definitions and formally define the LCCP, and establish a key connection between the LCCP and a problem on separating systems. In Section~III, we find the minimum number of draws for any $n$ and $k$. In Section IV, we define and use a Markov chain model to find the expected number of draws, with special attention paid to the case where $k=2$. Lastly, in Section V, we discuss our findings and draw immediate conclusions.

\section{Preliminaries}

In this section, we define the LCCP and establish the connection to separating systems ~\cite{Katona,Renyi61,Renyi62}.

\vspace{10pt}

\noindent
{\bf Problem [$(n,k)$-LCCP].}
Let $[n]:= \{1,2,...,n\}$ represent a set of coupons, $L$ be a set of labels with $|L| = n$, and $f: [n] \rightarrow L$ be a bijective \textit{labeling function} that assigns to each coupon, $j\in [n]$, a unique label, $f(j) \in L$. For a fixed drawing size $k \in \{1,...,n-1\}$, let $S_i\subseteq[n]$ represent the subset of coupons drawn in the $i$-th trial, whose labels are given by $f(S_i) = \{f(j): j\in S_i\}$, where $|S_i| = k$.
Define a random variable $Q(n,k)$ as the \textit{first} trial in which $f$ can be determined, i.e., all coupons' labels can be \textit{uniquely determined} given the system of pairs $\{(S_i,f(S_i))\}_{i=1}^{Q(n,k)}$, where $S_i$ and $f(S_i)$ are both \textit{unordered} $k$-subsets for all $i$. 

Find the \textit{minimum} and \textit{expectation} of $Q(n,k)$.\hfill$\square$

\smallskip
We consider this problem in two settings: one where $L$ is known at the start,  and one where $L$ is unknown; we refer to the corresponding LCCP problems as \textit{Type-I} and \textit{Type-II} respectively. The corresponding solutions for the minimum and expectation are denoted by $q_\mathrm{I}(n,k)$ and $\mathbb{E}[Q_{\mathrm{I}}(n,k)]$,  and 
$q_\mathrm{II}(n,k)$ and $\mathbb{E}[Q_{\mathrm{II}}(n,k)]$, respectively.

Next, we define \textit{separability} and \textit{coverability}.

\begin{definition}
    Given $n \in \mathbb{Z}^+$ and subsets $\{S_i\}_{i=1}^m \subseteq [n]$.
    \begin{itemize}
        \item A coupon $j \in [n]$ is \textit{separable} if for every $j' \neq j$, there exists $i \in [m]$ such that either $j \in S_i$ or $j' \in S_i$, but not both. Moreover, $\{S_i\}_{i=1}^m$ is a \textit{separating system} (or separates $[n]$) if every coupon is separable.
        \item A coupon $j \in [n]$ is \textit{covered} if $j \in \bigcup\limits_{i=1}^m S_i$. Moreover, $\{S_i\}_{i=1}^m$ is a \textit{covering system} (or \textit{covers} $[n]$) if every coupon is covered by $\{S_i\}_{i=1}^m$, i.e., $\bigcup\limits_{i=1}^m S_i =[n]$. \hfill $\square$
    \end{itemize}
\end{definition}

\noindent Note that the definition of a separating system is identical to the one in \cite{Katona}, and in the following lemma, we establish an intuitive characterization of separability and separating systems.  For subset $S \subseteq [n]$, let $S^\mathrm{c}$ denote its complement.

\begin{lemma}
    \label{lemma_separable}
    Given $n  \in \mathbb{Z}^+$ and subsets $\{S_i\}_{i=1}^m$. For all $i \in [m]$ and $j\in [n]$,  define 
     \begin{equation}
     \label{canonical_form}
      C_{i,j} = \begin{cases}
            S_i , & j \in S_i \\
            S_i^\mathrm{c}, & j \notin S_i
        \end{cases} 
        \end{equation}
      and
      \begin{equation}
      \label{candidates}
     c(j) = \bigcap\limits_{i=1}^m C_{i,j}. 
    \end{equation}
    Then, any $j \in [n]$ is separable given $\{S_i\}_{i=1}^m$ if and only if $\{j\} = c(j)$. Consequently, $\{S_i\}_{i=1}^m$ is a separating system if and only if this condition holds for all $j \in [n]$.
\end{lemma}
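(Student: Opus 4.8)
The plan is to show the two directions of the biconditional by unpacking the definitions of $C_{i,j}$ and $c(j)$. First I would observe the elementary fact underlying the construction: for a fixed coupon $j$ and a fixed index $i$, the set $C_{i,j}$ is precisely the set of coupons $j'$ that are \emph{indistinguishable from $j$ using only the $i$-th draw}, in the sense that $j' \in S_i \iff j \in S_i$. Indeed, if $j \in S_i$ then the coupons agreeing with $j$ on membership in $S_i$ are exactly those in $S_i$, and if $j \notin S_i$ they are exactly those in $S_i^{\mathrm c}$; either way this set is $C_{i,j}$, and note $j \in C_{i,j}$ always. Taking the intersection over all $i$, it follows that $c(j) = \bigcap_{i=1}^m C_{i,j}$ is exactly the set of coupons $j'$ such that for every $i$, $j' \in S_i \iff j \in S_i$ — that is, the coupons that no draw separates from $j$. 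In particular $j \in c(j)$ always, so $c(j) \supseteq \{j\}$ unconditionally.

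With this reformulation the lemma is essentially immediate. For the forward direction, suppose $j$ is separable. Then for every $j' \neq j$ there is some $i$ with exactly one of $j, j'$ in $S_i$, i.e. $j' \notin C_{i,j}$, hence $j' \notin c(j)$. Combined with $j \in c(j)$, this gives $c(j) = \{j\}$. For the converse, suppose $c(j) = \{j\}$. Then any $j' \neq j$ lies outside $c(j)$, so there exists $i$ with $j' \notin C_{i,j}$; by the characterization of $C_{i,j}$ this means $j'$ and $j$ differ on membership in $S_i$, which is exactly the separability condition for the pair $(j,j')$. Since $j'$ was arbitrary, $j$ is separable. The final sentence of the lemma is just the definition of a separating system (every coupon separable) applied coordinatewise, so it follows with no extra work.

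There is no real obstacle here; the only thing requiring care is stating cleanly the ``indistinguishability'' interpretation of $C_{i,j}$ — namely the equivalence $j' \in C_{i,j} \iff (j' \in S_i \Leftrightarrow j \in S_i)$ — since both branches of the case definition in~\eqref{canonical_form} must be checked, and one must not forget the trivial-but-essential fact that $j$ itself always belongs to $C_{i,j}$ and hence to $c(j)$, which is what pins the intersection down to the singleton $\{j\}$ rather than allowing it to be empty. Once that equivalence is recorded, both implications are one-line consequences of quantifier manipulation.
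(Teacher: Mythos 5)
Your proof is correct and follows essentially the same route as the paper's: both rest on the observation that $j \in C_{i,j}$ always (so $\{j\} \subseteq c(j)$) and that equality holds iff every $j' \neq j$ is excluded from some $C_{i,j}$, which is precisely the separability condition. Your version merely spells out the ``indistinguishability'' reading of $C_{i,j}$ more explicitly than the paper does.
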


\begin{proof}
Observe that $j \in C_{i,j}$ for all $i\in [m], j\in [n]$, so $\{j\} \subseteq c(j)$. Equality holds if and only if for all $j' \neq j$ there exists $i \in [m]$ such that $j' \notin C_{i,j}$. By definition, this occurs for all $j'\neq j$ if and only if $j$ is separable.
\end{proof}
\begin{remark}
\label{remark_sep}
Note that $c(j)$ is the smallest subset containing~$j$ that can be generated via set operations on the sets $\{S_i\}$, so we call $c(j)$ the \textit{separation class} of $j$, i.e., the set of coupons that cannot be separated from $j$ given $\{S_i\}_{i=1}^m$. %This gives us the following equivalence in \cref{thm_deducing}. \hfill 
$\square$

%\textcolor{red}{Thus, \cref{lemma_separable} establishes a canonical representation for each singleton if $\{S_i\}_{i=1}^m$ is a separating system, allowing us to establish a key equivalence between separability and the ability to deduce every coupon's label.} 
\end{remark}

\begin{theorem}
\label{thm_deducing} 
    For any $n \in \mathbb{Z}^+$, $\{S_i\}_{i=1}^m \subseteq [n]$, the label of every coupon can be determined given $\{(S_i, f(S_i))\}_{i=1}^m$ if and only if $\{S_i\}_{i=1}^m$ is a separating system (additionally a covering system in the Type-II setting).
\end{theorem}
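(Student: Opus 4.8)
The plan is to argue both directions by connecting the combinatorial notion of a separating system to the information the collector actually possesses, namely the collection of pairs $\{(S_i, f(S_i))\}_{i=1}^m$. The key observation is that knowing the pair $(S_i, f(S_i))$ is equivalent to knowing, for each coupon $j$, whether $f(j) \in f(S_i)$ — that is, whether $j \in S_i$ — since $f$ is a bijection. So from the data the collector can compute, for every candidate labeling $g$ consistent with the observations, exactly which coupons lie in which $S_i$; conversely any bijection $g$ that maps each $S_i$ onto the set $f(S_i)$ is a priori indistinguishable from $f$. Thus "the label of every coupon can be determined" means precisely that $f$ is the unique bijection (in the Type-I setting, from $[n]$ to $L$; in the Type-II setting, up to the unavoidable ambiguity in naming labels that never appear) sending each $S_i$ to $f(S_i)$.

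First I would handle the Type-I direction. For the "if" part, suppose $\{S_i\}$ separates $[n]$. Take any bijection $g\colon [n]\to L$ with $g(S_i) = f(S_i)$ for all $i$; I claim $g = f$. Fix $j$ and let $\ell = f(j)$. For each $i$, since $g(S_i) = f(S_i)$ we have $\ell \in f(S_i) \iff \ell \in g(S_i) \iff g^{-1}(\ell) \in S_i$, while also $\ell \in f(S_i) \iff j \in S_i$. Hence $g^{-1}(\ell)$ lies in exactly the same sets $S_i$ as $j$ does, so $g^{-1}(\ell) \in C_{i,j}$ for all $i$, i.e. $g^{-1}(\ell) \in c(j)$. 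By Lemma \ref{lemma_separable}, $c(j) = \{j\}$, so $g^{-1}(\ell) = j$, i.e. $g(j) = \ell = f(j)$. For the "only if" part, suppose $\{S_i\}$ is not separating; by Lemma \ref{lemma_separable} there is a coupon $j$ and some $j' \neq j$ with $j' \in c(j)$, meaning $j$ and $j'$ lie in exactly the same $S_i$'s. Then swapping their labels, i.e. the bijection $g = f \circ (j\,j')$, still satisfies $g(S_i) = f(S_i)$ for every $i$ (each $S_i$ either contains both or neither of $j,j'$), so $f$ is not uniquely determined. This also shows the Type-I setting does not need a covering hypothesis: if $j$ is uncovered then $\{j\}\subsetneq c(j)$ would fail only if some other coupon is also uncovered, but in fact a single uncovered coupon $j$ has $c(j) = \bigcap_i S_i^{\mathrm c} \supsetneq \{j\}$ unless $n=1$ — wait, more carefully: if exactly one coupon $j$ is uncovered, every other coupon is covered, so for each $j'\neq j$ some $S_i$ contains $j'$ but not $j$, separating them; thus one uncovered coupon is still fine in Type-I, where elimination pins it down.

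Next, the Type-II setting. Here $L$ is unknown, so the collector can only hope to recover $f$ up to relabeling of symbols never observed; the natural formalization is that $f$ is determined iff for every coupon $j$ the label $f(j)$ is forced, which requires $j$ to appear in some draw (otherwise $f(j)$ is an unseen symbol that could be anything) — hence covering is necessary — and, given covering, the same separating-system argument as above applies verbatim using only the observed labels. So the "if" direction assumes $\{S_i\}$ both separates and covers: covering ensures each $f(j)$ appears as an observed symbol, and separation plus the argument above forces the assignment; the "only if" direction splits into the non-separating case (label swap as before) and the non-covering case (an uncovered coupon's label is an unconstrained fresh symbol).

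The main obstacle I anticipate is not the combinatorial core — which is essentially a restatement of Lemma \ref{lemma_separable} — but pinning down the right formal meaning of "can be determined" in the Type-II case, where label names are intrinsically arbitrary; the cleanest route is to phrase it as: the partition of $[n]$ into fibers $f^{-1}(\ell)$ is recoverable, together with the identification of each nonempty-intersection fiber with an observed label, and to note that recovering the full bijection is then equivalent to every fiber being a singleton with an observed label, which is exactly "separating $+$ covering."
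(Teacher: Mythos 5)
Your proof is correct and follows essentially the same route as the paper's: both directions reduce to \cref{lemma_separable} via the separation classes $c(j)$, and the necessity of covering in the Type-II setting is argued identically (an undrawn coupon's label is never observed). Your formalization of ``determined'' as uniqueness of the bijection consistent with the data, together with the explicit transposition $g = f\circ(j\,j')$ witnessing ambiguity when separation fails, is a slightly more explicit rendering of the paper's assertion that $\{f(j)\}$ must be formable by set operations on the sets $f(S_i)$, but the underlying argument is the same.
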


\begin{proof}
    In the Type-II setting, covering is necessary, since each label must be observed at least once, that is, $\bigcup\limits_{i=1}^m f(S_i) = L = f([n]) $, so $ \bigcup\limits_{i=1}^m S_i =[n]$ by bijectivity. 
    
    Next, for any $j \in [n]$, $f(j)$ can be determined for any $j \in [n]$ only if $\{f(j)\}$ can be formed via set operations on $\{f(S_i)\}_{i=1}^m$. The smallest set of labels that can be obtained in this manner is $c(f(j)) =\bigcap\limits_{i=1}^m D_{i,j}$, where 
    \vspace{-0.75em}
    $$D_{i,j} = \begin{cases}
        f(S_i), & f(j) \in f(S_i) \\
        f(S_i)^\mathrm{c}, & f(j) \notin f(S_i).
    \end{cases}$$ 
    By bijectivity, $\{f(j)\} {=} c(f(j)) {=} f(c(j))$, so $c(j) {=} \{j\}$. By \cref{lemma_separable}, $\{S_i\}_{i=1}^m$ must separate $[n]$ in both settings.

    Conversely, if $\{S_i\}_{i=1}^m$ separates $[n]$ (and covers $[n]$ in the Type-II setting), then by \cref{lemma_separable}, the label of coupon $j \in [n]$ is the sole element of $f(c(j))$. 
\end{proof}
\begin{remark}
\label{remark_deducing}
    \cref{thm_deducing} implies that number of draws required to deduce the label of every coupon is the number of subsets required to form a separating system (additionally a covering system in the Type-II setting). Moreover, $c(j)$ is also the set of coupons whose label could be \textit{confused} with that of $j$ given $\{(S_i,f(S_i))\}_{i=1}^m$. \hfill $\square$
\end{remark}

Finally, we introduce a matrix representation used in~\cite{Katona}, where a sequence of drawn subsets $\{S_i\}_{i=1}^m\subseteq[n]$ corresponds to a \textit{draw matrix} $M \in \mathbb{F}_2^{m\times n}$, such that $M_{ij} = 1$ when $j \in S_i$. Thus, each row of $M$ has weight $k$, and each column corresponds to a coupon. Moreover, it can be observed that $\{S_i\}_{i=1}^m$ separates $[n]$ if and only if all of the columns of $M$ are distinct, and covers $[n]$ if and only if all of the columns are nonzero (i.e., have nonzero weight). %This fact will be helpful in the next section.

\section{Minimum Number of Draws} 

In this section, we find the minimum number of draws for the $(n,k)$-LCCP in both Type-I and Type-II settings. 
We will use the fact that if there exists $M \in \mathbb{F}_2^{m\times n}$ with weight-$k$ rows and distinct (resp. distinct and nonzero) columns, then $q_\mathrm{I}(n,k) \leq m$ (resp. $q_\mathrm{II}(n,k) \leq m$).

\subsection{Fundamental Results}

We begin with several universal lower bounds.

\begin{lemma}
    \label{lemma_logbound}
    For all $n, k \in \mathbb{Z}^+$ with $n > k$:
    \begin{enumerate}
        \item [(a)]\label{lemma_logbound_a} $q_{\mathrm{I}}(n,k) \geq \ceil{\log_2 n}, q_{\mathrm{II}}(n,k) \geq \ceil{\log_2(n+1)}$.
        \item [(b)]\label{lemma_logbound_b} $q_{\mathrm{I}}(n,k), q_{\mathrm{II}}(n,k) \geq \ceil{\log_2 2k}$.
        \item [(c)]\label{lemma_logbound_c} The bounds in (a) and (b) are identical and met with equality in the Type-I (resp. Type-II) setting when ${n = 2k}$ (resp. $n=2k-1$).
        \item [(d)]\label{lemma_logbound_d} If $k >1$, then $q_{\mathrm{I}}(2k,k) = q_{\mathrm{I}}(2k-1,k)$.
    \end{enumerate}
\end{lemma}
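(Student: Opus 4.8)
I would work throughout with the draw-matrix reformulation from the Preliminaries: a sequence of $m$ weight-$k$ draws of $[n]$ forms a separating system exactly when the $n$ columns of the associated matrix $M\in\F_2^{m\times n}$ are pairwise distinct, and a covering system exactly when, in addition, none of these columns is zero. Hence $q_{\mathrm I}(n,k)$ (resp.\ $q_{\mathrm{II}}(n,k)$) equals the least $m$ admitting an $M\in\F_2^{m\times n}$ all of whose rows have weight $k$ and whose columns are pairwise distinct (resp.\ pairwise distinct and nonzero). In particular every Type-II matrix is also a Type-I matrix, so $q_{\mathrm I}(n,k)\le q_{\mathrm{II}}(n,k)$.

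Part~(a) is a counting bound: the $n$ columns are distinct vectors of $\F_2^m$, so $2^m\ge n$, and in the Type-II case they are moreover nonzero, so $2^m\ge n+1$; these rearrange (using that $m$ is an integer) to $m\ge\ceil{\log_2 n}$ and $m\ge\ceil{\log_2(n+1)}$. For part~(b) I would fix any separating $M$ and restrict attention to the $k$ columns indexed by the coupons of the first draw $S_1$: each such column carries a $1$ in row~$1$, yet the $k$ of them are pairwise distinct, hence already pairwise distinct on the remaining $m-1$ rows, so $2^{m-1}\ge k$, i.e.\ $m\ge\ceil{\log_2 2k}$. The Type-II bound in~(b) then follows from $q_{\mathrm{II}}\ge q_{\mathrm I}$.

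For part~(c), that the bounds of~(a) and~(b) coincide is immediate arithmetic: $\ceil{\log_2 n}=\ceil{\log_2 2k}$ when $n=2k$, and $\ceil{\log_2(n+1)}=\ceil{\log_2 2k}$ when $n=2k-1$. To attain the bound, set $m=\ceil{\log_2 2k}$, so $k\le 2^{m-1}$, and build $M$ by selecting any $k$ of the $2^{m-1}$ complementary pairs $\{v,v+\mathbf 1\}$ in $\F_2^m$ and using their $2k$ members as the columns of $M$: the columns are distinct, and in every row exactly one member of each selected pair contributes a $1$, so every row has weight exactly $k$. This separating matrix gives $q_{\mathrm I}(2k,k)\le m$, which with~(a) forces equality. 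For the Type-II statement one has $n=2k-1$, which under the standing hypothesis $n>k$ requires $k\ge 2$; I would run the same construction with the pair $\{\mathbf 0,\mathbf 1\}$ among the $k$ selected pairs and then delete the all-zero column, obtaining $2k-1$ distinct nonzero columns whose row-weights are still all $k$ (the deleted column contributed no $1$'s), hence $q_{\mathrm{II}}(2k-1,k)\le m$ and, with~(a), equality.

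Finally, part~(d) I would obtain by sandwiching, for $k>1$: part~(b) gives $q_{\mathrm I}(2k-1,k)\ge\ceil{\log_2 2k}$; the inequality $q_{\mathrm I}\le q_{\mathrm{II}}$ together with part~(c) gives $q_{\mathrm I}(2k-1,k)\le q_{\mathrm{II}}(2k-1,k)=\ceil{\log_2 2k}$; and part~(c) also gives $q_{\mathrm I}(2k,k)=\ceil{\log_2 2k}$, so the two quantities are equal. (Equivalently, the ``delete the all-zero column'' matrix above serves directly as the Type-I construction for $n=2k-1$.) The one step I expect to require actual thought is the equality construction in~(c): one needs a mechanism that simultaneously forces distinct columns and rows of weight \emph{exactly} $k$ while using the information-theoretically minimal number $\ceil{\log_2 2k}$ of rows, and closure under complementation is exactly such a mechanism; everything else is bookkeeping with ceilings and the $n=2k$ versus $n=2k-1$ parity (noting that $k=1$ makes $n=2k-1=1$ violate $n>k$, so the Type-II statements tacitly assume $k\ge 2$).
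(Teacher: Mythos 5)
Your proof is correct and follows essentially the same route as the paper: distinct (and, for Type-II, nonzero) columns in $\F_2^m$ for part (a), counting the distinct columns sharing a $1$ in a fixed row for part (b), the complementary-pairs construction (with the all-zeros column deleted for Type-II) for part (c), and sandwiching for part (d). The only cosmetic difference is that in (d) you get the lower bound from part (b) directly, whereas the paper invokes $\ceil{\log_2(2k-1)}=\ceil{\log_2 2k}$ for $k>1$; both are immediate.
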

\begin{proof}
Suppose $\{S_i\}_{i=1}^m$ separates $[n]$ and corresponds to $M\in \mathbb{F}_2^{m\times n}$. The columns of $M$ form a subset of $\mathbb{F}_2^m$, so $n \leq |\mathbb{F}_2^m| = 2^m$, yielding the bounds in (a) (for Type-II, exclude the all-zero vector). Moreover, all vectors in $\mathbb{F}_2^m$ collectively have $2^{m-1}$ ones in each entry, so $k \leq 2^{m-1}$, yielding the bound in (b).

To prove (c), we construct $M \in \mathbb{F}_2^{m\times n}$ for $m = \ceil{\log_2 2k}$ corresponding to a separating system for $[2k]$. Let the columns of $M$ consist of the all-ones vector, the all-zeros vector, and 
$k-1$  distinct pairs of nonzero vectors that are complements of each other, which exist because $k-1 \leq 2^{m-1}-1$. This yields a separating system for $[2k]$; a separating and covering system for $[2k-1]$ is obtained by deleting the all-zeros column.

Finally, (d) holds since $\ceil{\log_2 2k} = \ceil{\log_2 (2k-1)}$ when $k > 1$. 
\end{proof}

\begin{remark}
    When $k$ is a power of $2$, the draw matrix 
  for the Type-II setting constructed in  Lemma~\ref{lemma_logbound}, part (c) is a Hamming matrix. \hfill $\square$
    %When $n = 2k = 2^m$ for $m\in \mathbb{Z}^+$, the Type-I (resp. Type-II) bound is attained by letting the columns of $M$ consist of the entirety of $\mathbb{F}_2^m$ (resp. $\mathbb{F}_2^m \setminus \{0\}$), making $M$ resemble the parity check matrix of a Hamming code. Although the bounds are tight when $n =2k-1$, it can be observed that they become loose when $n$ is much larger than $k$. \at{Move this to last sentence to before Lemma 6; change $k$ to $2k$} \hfill $\square$
\end{remark}

From \cref{thm_deducing}, it is clear that $q_{\mathrm{II}}(n,k) \geq q_{\mathrm{I}}(n,k)$, though the next lemma fully characterizes the relation between the Type-I and Type-II minima.

\begin{lemma}
    \label{lemma_type12_relations}
    For any $k\in \mathbb{Z}^+$:
    \begin{equation}
        q_\mathrm{II}(n,k) = \begin{cases}
            q_\mathrm{I}(n,k) , & k < n < 2k \\
            q_\mathrm{I}(n+1,k), & n \geq 2k-1.
        \end{cases}
    \end{equation}
    Moreover, when $n \geq 2k-1$, $q_\mathrm{II}(n,k)$ is either $q_{\mathrm{I}}(n,k)$ or $q_{\mathrm{I}}(n,k) + 1$.
\end{lemma}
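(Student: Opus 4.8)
The plan is to phrase everything in the draw-matrix language of Section~II: $q_\mathrm{I}(n,k)$ (resp.\ $q_\mathrm{II}(n,k)$) is the least $m$ for which some $0/1$ matrix with $m$ rows, each of weight $k$, has all $n$ columns distinct (resp.\ distinct and nonzero). In this language one inequality is free in both regimes: appending an all-zero column to a $q_\mathrm{II}$-optimal matrix keeps every row of weight $k$ and keeps the $n+1$ columns distinct, so $q_\mathrm{I}(n+1,k)\le q_\mathrm{II}(n,k)$; combined with the already-noted $q_\mathrm{II}(n,k)\ge q_\mathrm{I}(n,k)$, what is left to prove is $q_\mathrm{II}(n,k)\le q_\mathrm{I}(n,k)$ when $k<n<2k$, and $q_\mathrm{II}(n,k)\le q_\mathrm{I}(n+1,k)$ when $n\ge 2k-1$.

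Both remaining inequalities I would get from the \emph{same} exchange device: start from an optimal separating matrix, show it may be chosen to contain an all-zero column, and delete that column. For $n\ge 2k-1$, put $m=q_\mathrm{I}(n+1,k)$ and, among all $m\times(n+1)$ matrices with weight-$k$ rows and distinct columns, pick one, $M$, minimizing the least column weight $w$. Suppose $w\ge 1$: fix a column $a$ of weight $w$ and a row $r$ with $a_r=1$. Exactly $k$ columns carry a $1$ in row $r$ (one is $a$) and exactly $n+1-k$ carry a $0$; for each column $b$ of the latter type $b+e_r$ carries a $1$ in row $r$, the map $b\mapsto b+e_r$ is injective, and $b+e_r\ne a$ (else $b=a+e_r$ has weight $w-1$). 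Hence the $n+1-k$ vectors $\{b+e_r\}$ meet the columns of $M$ in at most $k-1$ cases, so at least $(n+1-k)-(k-1)=n+2-2k\ge 1$ of them are \emph{new}. Choosing such a $b$ and replacing the two columns $a,b$ by $a-e_r$ and $b+e_r$ preserves all row weights, and keeps the columns distinct: the altered $a$-column has weight $w-1$, so it differs from every unchanged column (weight $\ge w$) and from $b+e_r$ (weight $\ge w+1$), while $b+e_r$ was chosen to avoid the columns of $M$. This contradicts minimality of $w$, so $w=0$; $M$ has an all-zero column, and deleting it yields an $m\times n$ matrix with weight-$k$ rows and $n$ distinct nonzero columns, i.e.\ $q_\mathrm{II}(n,k)\le m=q_\mathrm{I}(n+1,k)$.

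For $k<n<2k$ I would run the same argument on $m\times n$ matrices with $m=q_\mathrm{I}(n,k)$, now choosing $M$ with as few all-zero columns as possible (hence $0$ or $1$). If one all-zero column $a$ survives, pick a row $r$ with $e_r$ \emph{not} already a column and observe, by the same counting (only $n-k<k$ columns carry a $0$ in row $r$), that at least $2k-n\ge 1$ of the $k$ columns $b$ carrying a $1$ in row $r$ have $b-e_r$ neither a column of $M$ nor $\mathbf{0}$; swapping $a,b$ for $e_r$ and $b-e_r$ keeps the rows of weight $k$ and the columns distinct while eliminating the last zero column --- a contradiction. Thus an optimal separating matrix for $[n]$ is already covering, so $q_\mathrm{II}(n,k)\le q_\mathrm{I}(n,k)$. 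I expect the one genuinely delicate point to be supplying the row $r$ with $e_r$ not a column: if every $e_r$ is a column then, $\mathbf{0}=a$ also being a column, one gets $n\ge m+k$, and this leftover configuration must be treated separately --- either by flipping several bits of $a$ simultaneously (turning it into the least-weight vector not already present, which exists since $n<2k\le 2^m$) and rerouting the removed $1$'s, or by induction on $k$ after deleting the columns $e_1,\dots,e_m$ (this drops every row weight to $k-1$ and lands in the first regime for $(n-m,\,k-1)$). Controlling column-distinctness through such a multi-bit exchange is, I think, the main obstacle of the whole argument.

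Finally, the ``moreover'' clause follows cheaply once the main identity is in hand: for $n\ge 2k-1$ we then have $q_\mathrm{I}(n+1,k)=q_\mathrm{II}(n,k)\ge q_\mathrm{I}(n,k)$, so it suffices to prove $q_\mathrm{I}(n+1,k)\le q_\mathrm{I}(n,k)+1$. Given an optimal separating matrix $M_0$ for $[n]$ with $m_0=q_\mathrm{I}(n,k)$ rows, adjoin one new bottom row and one new rightmost column whose top $m_0$ entries are $0$ (so the old rows keep weight $k$); put a $1$ in the new corner and fill the remaining $n$ entries of the new row to make it have weight $k$, which is possible since $0\le k-1\le n$. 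The old columns, extended downward, remain pairwise distinct, and the new column $(\mathbf{0},1)$ differs from each of them --- the only care being a possible all-zero column of $M_0$, under which we place a $0$ in the new row. This matrix separates $[n+1]$, so $q_\mathrm{I}(n+1,k)\le q_\mathrm{I}(n,k)+1$, and therefore $q_\mathrm{II}(n,k)=q_\mathrm{I}(n+1,k)\in\{q_\mathrm{I}(n,k),\,q_\mathrm{I}(n,k)+1\}$.
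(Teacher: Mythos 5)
Your treatment of the regime $n\ge 2k-1$ is correct and complete: the extremal choice of a matrix minimizing the least column weight, together with the count $(n+1-k)-(k-1)=n+2-2k\ge 1$ of admissible receivers $b$ with $b+e_r$ new, is a clean reorganization of the same one-bit-transfer idea the paper uses for that case, and your row-and-column augmentation proving $q_\mathrm{I}(n+1,k)\le q_\mathrm{I}(n,k)+1$ delivers the ``moreover'' clause just as well as the paper's device of inserting one extra weight-$k$ row to cover the unique all-zero column.

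The genuine gap is exactly where you flagged it, in the branch $k<n<2k$. Your single-bit swap needs a row $r$ with $e_r$ not already a column, and the leftover configuration (all of $\mathbf 0,e_1,\dots,e_m$ present as columns) is left unresolved: fix (a) is only sketched, and fix (b) does not land where you claim --- from $n\ge m+k$ you only get $n-m\ge k$, which does not imply $n-m\ge 2(k-1)-1$ unless $k\le 3$, so the reduced instance $(n-m,\,k-1)$ need not fall in the first regime. The paper closes this case by never insisting that the target of the zero column be a weight-one vector: since $n<2k\le 2^m$ there is some nonzero $v\in\F_2^m$ missing from the columns, and the zero column is transformed into $v$ one row at a time, receiving a $1$ in each row $i$ with $v_i=1$ from some donor column. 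A donor with a $1$ in row $i$ is blocked only if it has a twin column differing from it solely in row $i$; there are at most $n-k-1<k-1$ candidate twins (the columns, other than the zero one, with a $0$ in row $i$), so at least $2k-n+1\ge 2$ of the $k$ potential donors are unblocked, and at most one of those can have weight $1$ (it would have to be $e_i$), leaving a safe donor of weight at least $2$ whose depletion creates neither a duplicate nor a new zero column; finally, no donor can accidentally become $v$ itself because bits are only removed in rows where $v_i=1$. Supplying this counting (or an equivalent resolution of your leftover case) is what your proof still needs.
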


\begin{proof}
    See Appendix. 
\end{proof}

    An immediate consequence of \cref{lemma_type12_relations} is the following monotonicity property.
\begin{lemma}
\label{lemma_ordering} When $n \geq 2k-1$, $q_{\mathrm{I}}(n+1,k) \geq q_{\mathrm{I}}(n,k)$ and $q_{\mathrm{II}}(n+1,k) \geq q_{\mathrm{II}}(n,k)$. \hfill $\square$
\end{lemma}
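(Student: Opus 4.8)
The plan is to obtain both inequalities as immediate corollaries of \cref{lemma_type12_relations}, combined with the elementary bound $q_{\mathrm{II}}(n,k) \ge q_{\mathrm{I}}(n,k)$ recorded just before it (which holds because any draw matrix whose columns are distinct and nonzero in particular has distinct columns, so every configuration valid in the Type-II setting is valid in the Type-I setting). No new combinatorial construction is needed; the whole argument is a short chain of (in)equalities among the four quantities $q_{\mathrm{I}}(n,k), q_{\mathrm{I}}(n+1,k), q_{\mathrm{I}}(n+2,k)$ and their Type-II counterparts.

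First I would prove the Type-I statement. Fix $n \ge 2k-1$. The second case of \cref{lemma_type12_relations} gives $q_{\mathrm{I}}(n+1,k) = q_{\mathrm{II}}(n,k)$, and since $q_{\mathrm{II}}(n,k) \ge q_{\mathrm{I}}(n,k)$ always, we conclude $q_{\mathrm{I}}(n+1,k) \ge q_{\mathrm{I}}(n,k)$. Phrased differently, this shows $q_{\mathrm{I}}(\cdot,k)$ is non-decreasing on the set of integers at least $2k-1$. For the Type-II statement, fix $n \ge 2k-1$; then $n+1 \ge 2k-1$ as well, so \cref{lemma_type12_relations} applies to both indices and yields $q_{\mathrm{II}}(n,k) = q_{\mathrm{I}}(n+1,k)$ and $q_{\mathrm{II}}(n+1,k) = q_{\mathrm{I}}(n+2,k)$. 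Invoking the Type-I monotonicity just established, now with $n$ replaced by $n+1$ (legitimate since $n+1 \ge 2k-1$), gives $q_{\mathrm{I}}(n+2,k) \ge q_{\mathrm{I}}(n+1,k)$, i.e., $q_{\mathrm{II}}(n+1,k) \ge q_{\mathrm{II}}(n,k)$, as desired.

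I do not expect any real obstacle: the argument is essentially two lines once \cref{lemma_type12_relations} is in hand, and the only point requiring mild care is checking that the hypothesis $n \ge 2k-1$ is inherited by $n+1$ so that the "$n \ge 2k-1$" branch of \cref{lemma_type12_relations} is valid at both indices. It is worth remarking why one routes through \cref{lemma_type12_relations} rather than arguing directly: the naive approach of taking a weight-$k$ separating draw matrix for $[n+1]$ and deleting one column destroys the weight-$k$ row condition (rows meeting the deleted coupon drop to weight $k-1$), and patching those rows back up without creating a pair of equal columns is exactly the bookkeeping that \cref{lemma_type12_relations} already handles, so the packaged form is the cleaner route.
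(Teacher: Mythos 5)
Your proof is correct and is essentially identical to the paper's: both derive $q_{\mathrm{I}}(n+1,k) = q_{\mathrm{II}}(n,k) \geq q_{\mathrm{I}}(n,k)$ and then chain $q_{\mathrm{II}}(n+1,k) = q_{\mathrm{I}}(n+2,k) \geq q_{\mathrm{I}}(n+1,k) = q_{\mathrm{II}}(n,k)$ from \cref{lemma_type12_relations} together with the elementary bound $q_{\mathrm{II}} \geq q_{\mathrm{I}}$. Your added care about the hypothesis $n \geq 2k-1$ being inherited by $n+1$ is a reasonable (and correct) explicit check that the paper leaves implicit.
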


\begin{proof}
By \cref{lemma_type12_relations}, $q_{\mathrm{I}}(n+1,k) =q_{\mathrm{II}}(n,k) \geq q_{\mathrm{I}}(n,k)$, and $q_{\mathrm{II}}(n+1,k) = q_{\mathrm{I}}(n+2,k) \geq q_\mathrm{I}(n+1,k) = q_{\mathrm{II}}(n,k)$.

\end{proof}

We conclude this subsection with an easily proved symmetry property also observed in~\cite{Katona}.

\begin{lemma}
\label{lemma_symmetry}
     $q_{\mathrm{I}}(n,k) = q_{\mathrm{I}}(n,n-k)$. \hfill $\square $
\end{lemma}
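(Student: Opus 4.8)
The plan is to exhibit an explicit bijection between separating systems for $[n]$ using $m$ weight-$k$ subsets and separating systems for $[n]$ using $m$ weight-$(n-k)$ subsets, thereby showing the two minima coincide. The natural candidate is complementation: given subsets $\{S_i\}_{i=1}^m \subseteq [n]$, each of size $k$, consider $\{S_i^{\mathrm c}\}_{i=1}^m$, each of size $n-k$. In the language of the draw matrix from the end of Section~II, this is exactly the map $M \mapsto \bar M$ that flips every entry of $M \in \mathbb{F}_2^{m \times n}$, sending weight-$k$ rows to weight-$(n-k)$ rows.

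First I would recall the matrix characterization: $\{S_i\}_{i=1}^m$ separates $[n]$ if and only if the columns of the associated $M \in \mathbb{F}_2^{m \times n}$ are pairwise distinct. Then I would observe that complementing all entries of $M$ is an affine bijection on $\mathbb{F}_2^m$ applied columnwise, so it preserves distinctness of columns: column $c$ and column $c'$ of $M$ are equal if and only if their complements $\bar c$ and $\bar c'$ are equal. Hence $M$ has all columns distinct if and only if $\bar M$ does, i.e., $\{S_i\}$ separates $[n]$ if and only if $\{S_i^{\mathrm c}\}$ separates $[n]$. Since each row of $M$ has weight $k$ exactly when each row of $\bar M$ has weight $n-k$, the map is a weight-$k$-to-weight-$(n-k)$ bijection between separating systems of the same size $m$.

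Finally, I would conclude: any separating system of size $q_{\mathrm I}(n,k)$ with weight-$k$ rows yields, via complementation, one of the same size with weight-$(n-k)$ rows, so $q_{\mathrm I}(n,n-k) \le q_{\mathrm I}(n,k)$; applying the argument in the other direction (or noting that complementation is an involution) gives the reverse inequality, hence equality. Since $q_{\mathrm I}(n,k)$ is by Theorem~\ref{thm_deducing} precisely the minimum number of subsets forming a separating system, this proves the claim.

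I do not expect any real obstacle here — the only thing to be slightly careful about is that complementation genuinely is a bijection on the set of valid draw configurations (it is an involution, which settles both inequalities at once) and that it respects the row-weight constraint in the stated way; both are immediate. One could equally phrase the whole argument set-theoretically without matrices: for $j \ne j'$, some $S_i$ contains exactly one of $j, j'$ if and only if the same $S_i^{\mathrm c}$ contains exactly one of $j, j'$, so the separability condition for the pair $\{j,j'\}$ is identical for $\{S_i\}$ and $\{S_i^{\mathrm c}\}$.
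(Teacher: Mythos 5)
Your proposal is correct and is essentially the paper's own argument: complement the draw matrix $M$, note that weight-$k$ rows become weight-$(n-k)$ rows while distinctness of columns is preserved, and use the involution (or the symmetric argument) to get both inequalities. No issues.
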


\begin{proof}
    Take any matrix $M \in \mathbb{F}_2^{q_\mathrm{I}(n,k) \times n}$ with weight-$k$ rows and distinct columns; its component-wise complement $M^\mathrm{c}$ will have weight-$(n-k)$ rows and distinct columns, so $q_{\mathrm{I}}(n,n-k) \leq q_{\mathrm{I}}(n,k)$. Applying the same argument in the other direction yields the result.
\end{proof}

\begin{remark}
    It follows from \cref{lemma_type12_relations,lemma_symmetry} that the scope of the LCCP can be reduced to the regime $n \geq 2k$, since if $n < 2k$, then $n > 2(n-k)$. An analog of \cref{lemma_symmetry} for the distribution of $Q_\mathrm{I}(n,k)$ was shown in \cite{BYYB2025}. \hfill $\square$
\end{remark}

\subsection{Full Characterization}

% We begin with a lower bound that will be shown to hold with equality when $n \geq \binom{k+1}{2}$. \at{Review}
We begin with a lower bound that is useful when $n$ is large with respect to $2k$.

\begin{lemma} 
\label{lemma_fixedk}
     $q_{\mathrm{I}}(n,k) \geq \ceil{\frac{2n-2}{k+1}}$ and $q_{\mathrm{II}}(n,k)\geq \ceil{\frac{2n}{k+1}}$.
    
\end{lemma}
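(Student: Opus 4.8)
The plan is to count the number of \emph{distinct} columns a valid draw matrix can have after $m$ draws, using the fact that each row has weight exactly $k$, and then invert the resulting inequality. First I would set up the following refinement of the cardinality bound from \cref{lemma_logbound}: instead of merely observing $n \le 2^m$, I track how the distinct columns can be distributed among the $2^m$ possible binary strings of length $m$. The key structural observation is that if two columns of $M$ are complements of each other, then in every row $i$ exactly one of them contributes a $1$; and if a column equals the all-zeros (resp. all-ones) string, it contributes $0$ (resp. $1$) to every row. More usefully, pair up the $2^m$ strings into $2^{m-1}$ complementary pairs. For each such pair $\{v, v^{\mathrm c}\}$, in any given row exactly one of the two strings has a $1$ in that position. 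So if the draw matrix uses both members of $t$ complementary pairs and exactly one member of the remaining columns, a careful accounting of the total number of ones, $mk$ (since there are $m$ rows each of weight $k$), against the per-row contributions will bound how many columns can appear.

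The cleanest way to carry this out: let the columns of $M$ be $n$ distinct vectors in $\F_2^m$. Sum the weights of all rows: $\sum_{i=1}^m \mathrm{wt}(\text{row } i) = mk$. On the other hand, $\sum_{i=1}^m \mathrm{wt}(\text{row } i) = \sum_{j=1}^n \mathrm{wt}(\text{column } j)$, the total number of ones in $M$. So $\sum_{j=1}^n \mathrm{wt}(c_j) = mk$ where $c_1,\dots,c_n$ are the distinct columns. To get a lower bound on $m$, I want to show the $n$ smallest-weight distinct vectors in $\F_2^m$ already force $\sum \mathrm{wt}(c_j)$ to be large relative to $mk$ — no wait, that goes the wrong way. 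Instead I should pair columns: among the $n$ distinct columns, pair each column $c$ with its complement $c^{\mathrm c}$ when both are present; such a pair contributes $\mathrm{wt}(c) + (m - \mathrm{wt}(c)) = m$ ones. Unpaired columns contribute their own weight. In the Type-I setting, to make $\sum \mathrm{wt}(c_j) = mk$ as small as possible — i.e., to pack in as many columns as possible for a given $m$ — one wants many light unpaired columns, but distinctness limits how many vectors of small weight exist. The extremal configuration pairs up complementary vectors of weight $1$ and $m-1$, etc. Concretely: the average column weight must be $mk/n$, but the multiset of distinct vectors of $\F_2^m$ that minimizes total weight subject to having $n$ elements is an initial segment in weight order; comparing $mk/n$ to the achievable minimum average weight yields the bound. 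Actually the slick route is: each row has $k$ ones and $n-k$ zeros among the $n$ columns, and since columns are distinct, think of it as each row being a weight-$k$ indicator on $[n]$; the bound $\lceil (2n-2)/(k+1)\rceil$ strongly suggests pairing columns into complementary pairs (giving the $+1$ and the factor near $2$) with two leftover singletons (the all-zero and all-one columns), which is exactly the construction in \cref{lemma_logbound}(c). So I would argue: partition the $n$ distinct columns; at most two of them ($\v 0$ and $\v 1$) are "self-complementary-ish" endpoints, and the rest come in at most $\lfloor (n-2)/2 \rfloor$ ... hmm, rather, among any $n$ distinct vectors, the number of complementary pairs fully contained is some $t$, and the $n - 2t$ remaining vectors are pairwise non-complementary. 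I would bound the total weight $mk = \sum \mathrm{wt}(c_j)$ from below by noting paired vectors give $tm$ and the rest give at least... and from the structure derive $mk \ge (n-2) \cdot \frac{m+1}{2}$ or similar, i.e. $m(k+1) \ge 2n-2$.

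For Type-II, the same argument applies but the all-zeros column is forbidden (covering), which removes one of the two "cheap" leftover columns, improving $2n-2$ to $2n$; symmetry between $\v 0$ and $\v 1$ via \cref{lemma_symmetry}-type reasoning or a direct re-count handles the asymmetry. I would then take ceilings since $m$ is an integer.

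\textbf{Main obstacle.} The delicate part is making the counting argument tight and rigorous — specifically, showing that for any set of $n$ distinct columns in $\F_2^m$, the total weight $\sum_j \mathrm{wt}(c_j)$ is at least $\frac{(m+1)(n-2)}{2} + (\text{something})$ in the Type-I case and correspondingly larger in the Type-II case, so that equating with $mk$ gives the stated inequality. One must handle carefully the vectors that are \emph{not} part of a complete complementary pair: a priori they could all have very small weight, but there are only $\binom{m}{0} + \binom{m}{1} + \cdots$ vectors of each small weight, and — crucially — if $v$ is present and $v^{\mathrm c}$ is absent, that's fine, but you can't have too many light vectors. I expect the clean formulation is: pair the $2^m$ vectors of $\F_2^m$ into $2^{m-1}$ complementary pairs; from each pair the draw matrix selects $0$, $1$, or $2$ columns; if it selects $2$, their weights sum to $m$; the number of pairs from which it selects at least one is at least $\lceil n/2 \rceil$, and the total weight contributed is... this still needs the observation that selecting just one vector from a pair can give weight as low as $0$ only once (the pair $\{\v 0, \v 1\}$). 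So among the $\ge \lceil n/2 \rceil$ selected pairs, all but possibly one contribute total weight $\ge 1$ if one column chosen, or $= m$ if two; and pairs with two columns are the ones that help pack $n$ large. Balancing these contributions against $mk$ is the computation I would need to execute with care, but the shape of the answer ($(k+1)$ in the denominator, $2n-2$ vs $2n$ in the numerator) makes me confident the complementary-pairing bookkeeping is the right mechanism.
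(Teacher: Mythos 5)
Your starting point --- double-counting the ones in the draw matrix, so that $\sum_j \mathrm{wt}(c_j) = mk$ over the $n$ distinct columns $c_j$, and then lower-bounding the total column weight --- is exactly the paper's mechanism, and your passing remark that the weight-minimizing set of $n$ distinct vectors is an initial segment in weight order is the right instinct. But you then abandon that route in favor of complementary pairing, and that is where the proposal breaks. The pairing bookkeeping cannot produce this bound: a complementary pair contributes total weight $m$, so if you try to balance $t$ pairs plus leftovers against $mk$, the row length $m$ appears on both sides and cancels, leaving a relation between $n$ and $k$ with no lower bound on $m$ at all. Relatedly, your candidate intermediate inequality $mk \ge (n-2)\tfrac{m+1}{2}$ is not equivalent to $m(k+1)\ge 2n-2$ and in fact rearranges to $m(2k-n+2)\ge n-2$, which is vacuous (or false) precisely in the regime $n \gg k$ where this lemma matters. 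Complementary pairs are the extremal structure for the logarithmic bound of \cref{lemma_logbound}(c), not for this one; here the extremal matrices use \emph{low}-weight columns (see \cref{ex: opt min}, where every column has weight $1$ or $2$).

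The missing idea is a count of how many low-weight columns distinctness permits. Let $A_w$ be the number of weight-$w$ columns. Since the columns are distinct, $A_0 \le 1$ and $A_1 \le m$ (there are only $m$ weight-one vectors in $\F_2^m$), and every remaining column has weight at least $2$. Hence
\begin{equation*}
mk \;=\; \sum_{w=0}^{m} w A_w \;\ge\; A_1 + 2\bigl(n - A_0 - A_1\bigr),
\end{equation*}
i.e.\ $2n \le mk + 2A_0 + A_1 \le mk + 2 + m$, which gives $m(k+1) \ge 2n-2$ and the Type-I bound after taking ceilings; for Type-II the covering condition forces $A_0 = 0$, giving $m(k+1) \ge 2n$. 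Note where the constants actually come from: the $+1$ in the denominator is the $A_1 \le m$ term being absorbed into $mk+m$, and the $-2$ in the Type-I numerator is the deficit of the single permitted all-zero column --- not, as you conjectured, two leftover singletons from a pairing of $\F_2^m$.
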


\begin{proof}
    Consider any separating 
    system $\{S_i\}_{i=1}^m$ for some $m \in \mathbb{Z}^+$ corresponding to draw matrix $M \in \mathbb{F}_2^{m \times n}$. For $w\in \{0,...,m\}$, let $A_w$ be the number of weight-$w$ columns, so $\sum\limits_{w=0}^m A_w = n$ and $A_w \leq \binom{m}{w}$ (since columns are distinct). Then, the total number of ones in $M$ is $$mk = \sum\limits_{w=0}^m A_w w = A_1 + \sum\limits_{w=2}^m A_w w \geq A_1 + 2(n-A_0-A_1),$$ implying  $2n \leq mk + 2A_0 + A_1$. 
    
   The bound for Type-I follows from $A_0 \leq 1, A_1 \leq m$, and for Type-II from  $A_0 = 0, A_1 \leq m$.
\end{proof}

\begin{remark}
    The bound in Lemma~\ref{lemma_fixedk}  is tight when $n \geq \binom{k+1}{2}$ (see \cref{lemma_ceil}), whereas the bound in \cref{lemma_logbound} is loose in this regime, as  illustrated by \cref{ex: opt min}. \hfill $\square$
\end{remark}

\begin{example}\label{ex: opt min}
    Consider the Type-II $(15,5)$-LCCP and subsets $\{S_i\}_{i=1}^5 \subseteq [15]$ shown in \autoref{fig:const_min}, where row $i$ corresponds to set $S_i$ and column $j$ corresponds to coupon $j$. It can be seen that $15 = \binom{5+1}{2}$, and that $\{S_i\}_{i=1}^5$ separates and covers $[15]$, achieving the lower bound  $\frac{2n}{k+1} = 5$. \hfill $\square$
    \begin{figure}[ht]
    \centering \includegraphics[width=0.85\linewidth]{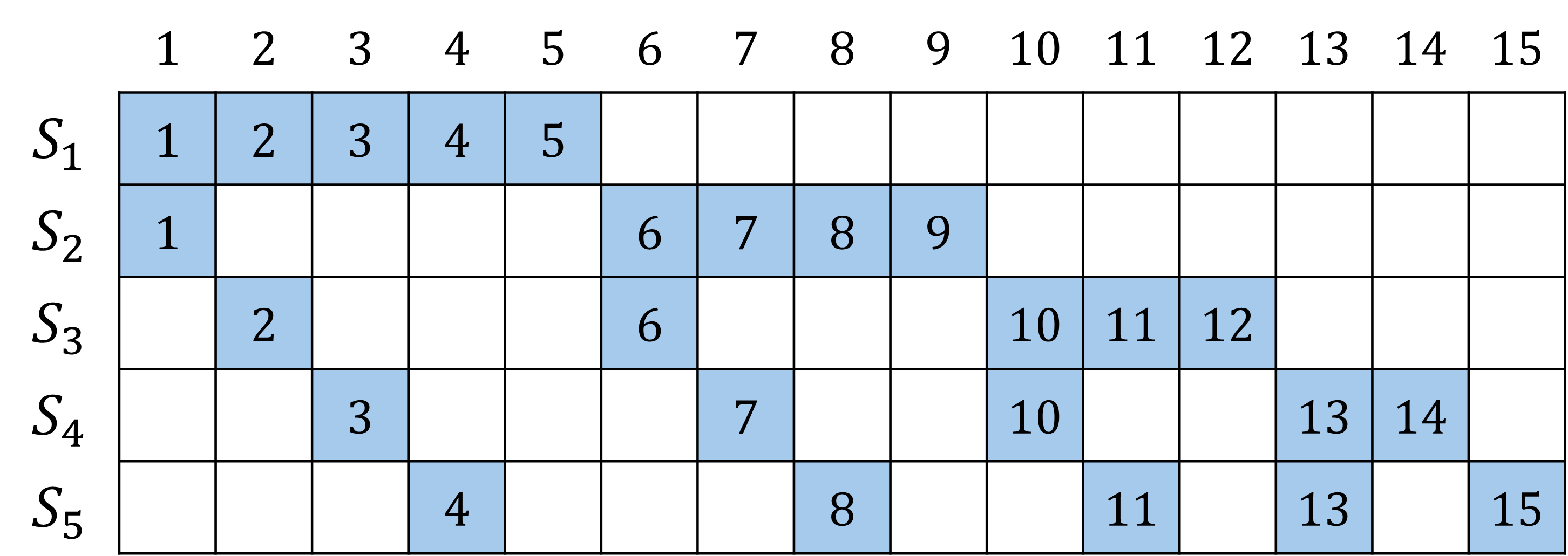}
    \caption{Visualization of a separating and covering system achieving the minimum of 5 draws for the Type-II $(15,5)$-LCCP } 
    \label{fig:const_min}
\end{figure}
\end{example}

The proof of \cref{lemma_fixedk} suggests that it is worth considering the dual problem: namely, given a fixed number of draws $m$ and drawing size $k$, what is the \textit{maximum} number of coupons whose label can be determined in $m$ draws? This can be achieved by using low column weights in the draw matrix, such as in \cref{ex: opt min} where each column has a weight of $1$ or $2$. The analysis of such matrices is facilitated by introducing the following concept.

\begin{definition}
   Denote the volume of a Hamming ball in $\mathbb{F}_2^m$ with radius $r$ as $\Vol_2(r,m) = \sum\limits_{\ell=0}^r \binom{m}{\ell}$. For $n \in \mathbb{Z}^+$ such that $n \leq 2^m$, define its \textit{minimum Hamming radius} $r_m(n)$ as the smallest value of $r$ such that $\Vol_2(r,m) \geq n$; i.e., $r_m(n) = \min\{ r : \Vol(r,m) \geq n\}$ \hfill $\square$
\end{definition}

\begin{lemma}
\label{lemma_efficiency}
    For $k \in \mathbb{Z}^+$, define the sequence $\{m_{k,\ell}\}_{\ell=1}^{\infty} = \{ \ceil{\log_2 2k}, \ceil{\log_2 2k}+1,... \}$, and, for all $\ell \in \mathbb{Z}^+$, define 
    \begin{align*}
    \hat{k}_\ell &= \Vol_2(r_{m_{k,\ell}{-}1}(k){-} 1, m_{k,\ell}{-}1)\\
        n_{k,\ell} &= \Vol_2 (r_{m_{k,\ell}{-}1}(k),m_{k,\ell}){-}1 + \floor{\frac{m_{k,\ell}(k{-}\hat{k}_\ell)}{r_{m_{k,\ell}-1}(k){+}1} }.
    \end{align*}

Then, $\{n_{k,\ell}\}_{\ell=1}^\infty$ is a strictly increasing sequence, and for all $\ell\in \mathbb{Z}^+$, $n_{k,\ell} \geq 2k-1$ and $q_\mathrm{I}(n_{k,\ell}+1,k) = q_{\mathrm{II}}(n_{k,\ell},k)=m_{k,\ell}$.
\end{lemma}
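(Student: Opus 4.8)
The plan is to reduce the whole statement to the extremal function $N_{\mathrm{II}}(m,k)$, defined as the largest number of columns a binary matrix can have given that it has $m$ rows, each of weight $k$, and distinct nonzero columns. By \cref{thm_deducing} together with the draw-matrix correspondence, $q_{\mathrm{II}}(n,k)\le m$ holds if and only if such an $m\times n$ matrix exists, so $N_{\mathrm{II}}(m,k)=\max\{n:q_{\mathrm{II}}(n,k)\le m\}$. Writing $m=m_{k,\ell}$, $r=r_{m-1}(k)$ and $\hat k=\hat k_\ell=\Vol_2(r-1,m-1)$, the core of the proof is the identity $N_{\mathrm{II}}(m,k)=\Vol_2(r,m)-1+\floor{\frac{m(k-\hat k)}{r+1}}$, whose right-hand side is exactly $n_{k,\ell}$; I would establish this for every $m\ge\ceil{\log_2 2k}$, always with $r=r_{m-1}(k)$. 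Everything else is comparatively routine once this identity is in hand.

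Granting it, I would first prove by induction on $m\ge\ceil{\log_2 2k}$ that $N_{\mathrm{II}}(m,k)\ge 2k-1$ and $N_{\mathrm{II}}(m+1,k)\ge N_{\mathrm{II}}(m,k)+1$. The base case $N_{\mathrm{II}}(m_{k,1},k)\ge 2k-1$ is \cref{lemma_logbound}(c), which gives $q_{\mathrm{II}}(2k-1,k)=\ceil{\log_2 2k}=m_{k,1}$; the inductive step takes an optimal $m\times N_{\mathrm{II}}(m,k)$ matrix, adjoins the weight-$1$ column supported on a new coordinate, and adjoins a new row carrying a $1$ in that column and in exactly $k-1$ of the old columns (possible because $N_{\mathrm{II}}(m,k)\ge 2k-1\ge k-1$), producing a valid $(m+1)\times(N_{\mathrm{II}}(m,k)+1)$ matrix. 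Since $m_{k,\ell}=m_{k,1}+\ell-1$, this yields both $n_{k,\ell}\ge 2k-1$ and the strict increase $n_{k,\ell+1}\ge n_{k,\ell}+1$. Then $q_{\mathrm{II}}(n_{k,\ell},k)=m_{k,\ell}$ follows: ``$\le$'' because $n_{k,\ell}=N_{\mathrm{II}}(m_{k,\ell},k)$, and ``$\ge$'' because $q_{\mathrm{II}}(n_{k,\ell},k)\le m_{k,\ell}-1$ would force $n_{k,\ell}\le N_{\mathrm{II}}(m_{k,\ell}-1,k)=N_{\mathrm{II}}(m_{k,\ell-1},k)=n_{k,\ell-1}<n_{k,\ell}$ for $\ell\ge 2$, and contradicts \cref{lemma_logbound}(b) for $\ell=1$. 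Finally $q_{\mathrm{I}}(n_{k,\ell}+1,k)=q_{\mathrm{II}}(n_{k,\ell},k)$ is immediate from \cref{lemma_type12_relations}, as $n_{k,\ell}\ge 2k-1$.

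For the upper bound $N_{\mathrm{II}}(m,k)\le\Vol_2(r,m)-1+\floor{\frac{m(k-\hat k)}{r+1}}$ I would sharpen the counting in the proof of \cref{lemma_fixedk}. If $A_w$ denotes the number of weight-$w$ columns of a valid $m\times n$ matrix, then $A_0=0$, $A_w\le\binom mw$, $\sum_w wA_w=mk$ (the number of ones, counted by rows and by columns), and $n=\sum_w A_w$. Maximizing $\sum_w A_w$ subject to the one linear constraint $\sum_w wA_w=mk$ and the box constraints $0\le A_w\le\binom mw$ is a fractional knapsack problem, and a short exchange argument shows its optimum is attained by saturating the lightest weights first: $A_w=\binom mw$ for $w\le r$, which consumes exactly $m\hat k\le mk$ ones (the inequality is the defining property of $r=r_{m-1}(k)$), and then $A_{r+1}=\frac{m(k-\hat k)}{r+1}$, which is feasible since $k-\hat k\le\binom{m-1}{r}$ (again by the definition of $r$). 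Hence $n\le\Vol_2(r,m)-1+\frac{m(k-\hat k)}{r+1}$, and, $n$ being an integer, we may take the floor.

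The matching construction carries the real weight of the proof. Begin with all $\Vol_2(r,m)-1$ nonzero columns of weight at most $r$; this already makes every row weight equal to $\hat k$. It remains to adjoin $t:=\floor{\frac{m(k-\hat k)}{r+1}}$ further distinct columns, each of weight at least $r+1$ (hence not already present), so that every row gains exactly $k-\hat k$ more ones. Their total weight is forced to be $m(k-\hat k)$, so almost all of them have weight exactly $r+1$, the surplus $m(k-\hat k)-t(r+1)<r+1$ being absorbed by promoting boundedly many of them to weight $r+2$ (or, when $r$ is close to $m$, by a single heavier column). What remains is to exhibit a $(k-\hat k)$-regular, almost $(r+1)$-uniform family of $t$ distinct subsets of $[m]$: when $(r+1)\mid m(k-\hat k)$ this is a $(k-\hat k)$-regular spanning subhypergraph of the complete $(r+1)$-uniform hypergraph on $m$ vertices, which exists by a standard consequence of Baranyai's theorem, and the general case reduces to it after the surplus adjustment; when $k$ is close to $2^{m-1}$, it is cleaner to describe the extremal matrix dually, as all nonzero columns minus a small balanced family of heavy columns. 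Reconciling the requirement of exact row weights with the extremal column-weight profile --- i.e., establishing the existence of this balanced family in every parameter regime --- is the step I expect to be the main obstacle; the counting upper bound and the bookkeeping for monotonicity and for the two minima are routine by comparison.
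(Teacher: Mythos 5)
Your overall architecture --- the dual extremal function $N_{\mathrm{II}}(m,k)$, the knapsack upper bound on the column-weight profile, a matching construction, and the derivation of monotonicity, $n_{k,\ell}\geq 2k-1$, and the Type-I statement from \cref{lemma_logbound}, \cref{lemma_type12_relations} --- mirrors the paper's proof, and those bookkeeping steps are sound. Your monotonicity argument (adjoin a new row and a new weight-one column to an optimal matrix) is in fact a clean alternative to the paper's route, which instead deduces $n_{k,\ell-1}<n_{k,\ell}$ from the non-achievability of $n_{k,\ell}$ with $m_{k,\ell}-1$ rows. The counting upper bound $n\leq \Vol_2(r,m)-1+\lfloor m(k-\hat k)/(r+1)\rfloor$ is also correct and essentially identical to the paper's.

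The genuine gap is exactly the step you flag as ``the main obstacle'' and then do not close: the existence of an $m\times n_{k,\ell}$ matrix with distinct nonzero columns realizing the extremal column-weight profile \emph{and} constant row weight $k$. Your sketch (take all columns of weight at most $r$, then adjoin a $(k-\hat k)$-regular, almost $(r+1)$-uniform family via Baranyai's theorem, absorbing the surplus $m(k-\hat k)-t(r+1)$ by promoting some columns to weight $r+2$) is plausible when $(r+1)\mid m(k-\hat k)$, but the surplus adjustment breaks row-regularity --- promoting a column adds a one to some row, which must then be compensated elsewhere --- and you give no argument that this reconciliation is always possible, nor do you handle the boundary regime $r=m-1$ beyond a remark. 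The paper avoids this entirely by invoking Theorem~1 of \cite{Katona}, which states that integers $A_1,\dots,A_m$ with $\sum_w A_w=n$, $A_w\leq\binom{m}{w}$, and $\sum_w wA_w=mk$ exist if and only if a separating and covering draw matrix with weight-$k$ rows exists; the construction problem is thereby reduced to exhibiting a feasible integer profile, which the paper does by starting from the knapsack maximizer and shifting one column from weight $r+2-d$ to weight $r+2$ to make the total weight exactly $mk$. To complete your proof you should either cite that realizability theorem or actually prove the degree-constrained hypergraph existence statement you assert; as written, the lower bound $N_{\mathrm{II}}(m_{k,\ell},k)\geq n_{k,\ell}$, and hence the entire lemma, is unproven.
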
 

\begin{proof}

 We prove the result for $q_{\mathrm{II}}(n_{k,\ell},k)$. The result for  $q_{\mathrm{I}}(n_{k,\ell},k)$ will follow immediately from \cref{lemma_type12_relations}, since it will be shown that $n_{k, \ell} \geq 2k-1$. 
    
    Note that $r_{m_{k,\ell}-1}(k) \leq m_{k,\ell}-1$, and if $r_{m_{k,\ell}}(k) = m_{k,\ell}-1$, then $k > \Vol_2(m_{k,\ell}-2,m_{k,\ell}-1) = 2^{m_{k,\ell}-1} - 1$ and $k \leq \Vol_2(m_{k,\ell}-1,m_{k,\ell}-1) = 2^{m_{k,\ell}-1}$, implying ${k = 2^{m_{k,\ell}-1}}$, and a draw matrix $M$ whose columns consist of all distinct nonzero vectors in $\mathbb{F}_2^{m_{k,\ell}}$ represents a separating and covering system for $[n_{k,\ell}]$. Lemma~\ref{lemma_logbound} part (c) then gives the desired result. Thus, we assume that $r_{m_{k,\ell}}(k) \leq m_{k,\ell}-2$. 

    By Theorem 1 in \cite{Katona}, there exists $M \in \mathbb{F}_2^{m\times n}$ corresponding to a separating and covering system for $[n]$ if and only if there exist integers $A_1,...,A_m \in \mathbb{Z}^+$ such that  
    \begin{align}
    \label{eqn_valid}
        n& = \sum\limits_{w=1}^m A_w\nonumber\\  A_w &\leq \binom{m}{w} \; \forall \; w=1,...,m
        \\mk &= \sum\limits_{w=1}^m w A_w \nonumber .
    \end{align}
    
    Thus, it suffices to show that a solution to \eqref{eqn_valid} with ${n = n_{k,\ell}}$ exists when $m = m_{k,\ell}$ but not when $m = m_{k,\ell} -1$. 

    Set $m = m_{k,\ell}$, $r = r_{m-1}(k)$, and $\hat{k} = \hat{k}_\ell  = \Vol_2 (r-1,m-1)$, and consider the following maximization problem:

    \begin{equation}
    \label{eqn_max}
\max\limits_{\substack{ A_w \leq \binom{m}{w}\\\sum\limits_{w=1}^m wA_w \leq mk}} \; \sum\limits_{w=1}^m A_w. 
    \end{equation}
    
   Expressed in terms of the well-known knapsack problem, the goal here is to maximize the number of items carried, given a total weight limit of $mk$  and a supply of $\binom{m}{w}$ items of weight $w$, for $w = 1,...,m$. Note that the maximum of~\eqref{eqn_max} is an upper bound to the maximum value of $n$ where a solution to $\eqref{eqn_valid}$ exists.
    
    Observe that one maximizer for \eqref{eqn_max} is:
    \begin{equation}
    \label{eqn_maximizer}
        A_w = \begin{cases}
            \binom{m}{w}, & 1 \leq w \leq r  \\
             \floor{\frac{m(k- \hat{k})}{r+1}}, & w = r+1 \\
             0, & \mathrm{otherwise}.
        \end{cases}
    \end{equation}
\noindent
    Indeed, since $w \binom{m}{w} = m\binom{m-1}{w-1}$, then $\sum\limits_{w=1}^m wA_w = \sum\limits_{w=1}^r w \binom{m}{w} + (r+1)A_{r+1} = \sum\limits_{w=1}^r m\binom{m-1}{w-1} + (r+1)A_{r+1} = m \hat{k} + (r+1)A_{r+1}$, and so $mk - \sum\limits_{w=1}^m wA_w \leq r$, by definition of $A_{r+1}$. No more items can be added, and no item can be replaced by two other items without violating the constraints. Moreover, the total number of items is $\sum\limits_{w=1}^m A_w = \sum\limits_{w=1}^r \binom{m}{w} + A_{r+1} = \Vol_2 (r,m) -1 + A_{r+1} = n_{k,\ell}$.

    The $A_w$ values in \eqref{eqn_maximizer} do not necessarily satisfy $mk = \sum\limits_{w=1}^m wA_w$, but this can be  fixed by letting $d = mk - \sum\limits_{w=1}^m wA_w \leq r$, decreasing $A_{r+2-d}$ by $1$, and increasing $A_{r+2}$ by $1$. Hence, there exists a separating and covering system for $[n_{k,\ell}]$, so $q_{\mathrm{II}}(n_{k,\ell},k) \leq m_{k,\ell}$. Additionally, for all $\ell\in \mathbb{Z}^+$, $n_{k,\ell}$ is the maximum of \eqref{eqn_max}, and more importantly is the maximum $n$ for which $[n]$ can be separated and covered by $m_{k,\ell}$ draws of size $k$. Since $m_{k,\ell} \geq \ceil{\log_2 2k}$, $[2k-1]$ can be separated and covered by $m_{k,\ell}$ draws (via \cref{lemma_logbound} part (c)), so $n_{k,\ell } \geq 2k-1$ by maximality.

    To show equality, we show that no separating and covering system of $[n_{k,\ell}]$ exists for $m_{k,\ell}-1$ draws, namely by showing that $n_{k,\ell}$ cannot be achieved in \eqref{eqn_max} with $m_{k,\ell}$ replaced by $m_{k,\ell}-1$ without violating the total weight constraint. Moreover, since $n_{k,\ell-1}$ is the maximum $n$ for which $[n]$ can be separated and covered by $m_{k,\ell-1}$ draws, this will also imply that $n_{k,\ell-1} < n_{k,\ell}$, or in other words, $\{n_{k,\ell}\}_{\ell=1}^\infty$ is strictly increasing.

    We start with the earlier solution from \eqref{eqn_maximizer}, which now violates the supply constraint for each $w \leq r$, and remove the "overfill" in each weight class.  The total number of items removed is $\sum\limits_{w=1}^r \binom{m}{w}-\binom{m-1}{w} = \sum\limits_{w=1}^r \binom{m-1}{w-1} = \hat{k}$, and the total weight is decreased by $\sum\limits_{w=1}^r w \binom{m-1}{w-1}$. Since the total weight limit has been reduced by $k$, then we must add $\hat{k}$ items of weight at least $r+1$ so that their total weight is at most $d + \sum\limits_{w=1}^r w \binom{m-1}{w-1} -k$, where $d\leq r$. However, since $k > \hat{k}$ and $\hat{k} = \sum\limits_{w=0}^r \binom{m-1}{w} \geq r$, then $d + \sum\limits_{w=1}^r w \binom{m-1}{w-1} -k< r + r\hat{k} -\hat{k} = r+(r-1)\hat{k} \leq r\hat{k} <(r+1) \hat{k}$, which means that the total weight limit cannot be satisfied with $n_{k,\ell}$ items. Therefore, $q_{\mathrm{II}}(n_{k,\ell},k) = m_{k,\ell}$.
\end{proof}
\begin{remark}
   Note that $n_{k,\ell}$  can be viewed as ``thresholds'' determining the maximum number of coupons whose labels can be deduced with $m_{k,\ell}$ draws of size $k$. This is evident later in \cref{thm_minimum_full}. \hfill $\square$
\end{remark}
The following result is a consequence of \cref{lemma_type12_relations,lemma_ordering,lemma_fixedk,lemma_efficiency}. It was first proved by other means in \cite{Katona}.
 \begin{lemma}
 \label{lemma_ceil} Lemma~\ref{lemma_fixedk} holds with equality if $n \geq \binom{k+1}{2}$.
\end{lemma}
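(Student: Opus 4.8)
The plan is to match the lower bounds of \cref{lemma_fixedk} with explicit constructions whose draw matrices use only columns of weight $1$ and $2$. I would prove the Type-II equality first and then read off the Type-I equality from \cref{lemma_type12_relations}.

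\emph{Type-II.} Fix $n \ge \binom{k+1}{2}$ and set $m := \ceil{\tfrac{2n}{k+1}}$, so that $q_{\mathrm{II}}(n,k) \ge m$ by \cref{lemma_fixedk}. I would first record the elementary consequences of this choice of $m$ together with the hypothesis $n \ge \binom{k+1}{2}$, namely $k \le m$ (since $\ceil{\tfrac{2\binom{k+1}{2}}{k+1}} = k$) and $mk \le 2n \le m(k+1)$ (the first from $m(k+1) \le 2n+k$ and $m \ge k$, the second from $m \ge \tfrac{2n}{k+1}$). Now put $A_1 := 2n - mk$, $A_2 := mk - n$, and $A_w := 0$ for all other $w$; a one-line check gives $A_1 + A_2 = n$, $A_1 + 2A_2 = mk$, and the displayed inequalities give $0 \le A_1 \le m = \binom{m}{1}$ and $A_2 \ge 0$. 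By Theorem~1 of \cite{Katona} — the characterization \eqref{eqn_valid} — the existence of a separating and covering system for $[n]$ with $m$ draws of size $k$ now reduces to the single remaining inequality $A_2 \le \binom{m}{2}$, i.e. $2n \ge m(2k+1-m)$. The right-hand side is a downward parabola in $m$ with vertex at $m = k+\tfrac12$, so over integers $m \ge k$ its maximum is attained at $m \in \{k,k+1\}$, where its value is exactly $k(k+1) = 2\binom{k+1}{2} \le 2n$. Hence $q_{\mathrm{II}}(n,k) \le m$, and with the lower bound $q_{\mathrm{II}}(n,k) = \ceil{\tfrac{2n}{k+1}}$.

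\emph{Type-I.} If $n \ge \binom{k+1}{2}+1$, then $n-1 \ge \binom{k+1}{2}$ and $n \ge 2k-1$, so \cref{lemma_type12_relations} and the Type-II case just proved give $q_{\mathrm{I}}(n,k) = q_{\mathrm{II}}(n-1,k) = \ceil{\tfrac{2(n-1)}{k+1}} = \ceil{\tfrac{2n-2}{k+1}}$. The one remaining value $n = \binom{k+1}{2}$ (relevant only for $k\ge2$, since $n>k$) is handled by squeezing: $q_{\mathrm{I}} \le q_{\mathrm{II}}$ always (a separating-and-covering system is in particular separating), the Type-II case gives $q_{\mathrm{II}}(\binom{k+1}{2},k) = k$, and \cref{lemma_fixedk} gives $q_{\mathrm{I}}(\binom{k+1}{2},k) \ge \ceil{\tfrac{2\binom{k+1}{2}-2}{k+1}} = k$, so $q_{\mathrm{I}}(\binom{k+1}{2},k) = k = \ceil{\tfrac{2n-2}{k+1}}$.

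\emph{Main obstacle.} The only non-automatic step is the inequality $A_2 \le \binom{m}{2}$, and verifying it is precisely what pins down the threshold $\binom{k+1}{2}$ (the parabola estimate, together with the tight edge case $m=k$, $n=\binom{k+1}{2}$). A secondary point to state carefully is the legitimacy of invoking \eqref{eqn_valid} — that a feasible column-weight multiset genuinely yields a draw matrix with every row of weight $k$; one can instead argue this by hand, noting that the required degree sequence ($A_1$ entries equal to $k-1$ and $m-A_1$ entries equal to $k$, on $m$ vertices) is graphical because $k \le m-1$ whenever $m \ge k+1$, while the case $m=k$ forces $n=\binom{k+1}{2}$ and is realized explicitly by $K_k$ together with the $k$ weight-$1$ columns. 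Finally, one should remark that a parallel derivation uses the thresholds $n_{k,\ell}$ of \cref{lemma_efficiency} with the monotonicity of \cref{lemma_ordering}: once $m_{k,\ell}\ge k$ one has $r_{m_{k,\ell}-1}(k)=1$, hence $n_{k,\ell} = m_{k,\ell} + \floor{m_{k,\ell}(k-1)/2}$, and a direct computation shows $\ceil{2n/(k+1)} = m_{k,\ell}$ for every $n$ with $n_{k,\ell-1} < n \le n_{k,\ell}$ and $n \ge \binom{k+1}{2}$.
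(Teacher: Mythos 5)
Your proof is correct, but it takes a genuinely more direct route than the paper's. The paper obtains the upper bound as a corollary of \cref{lemma_efficiency}: it sets $m=\ceil{\frac{2n}{k+1}}$, observes that $n\ge\binom{k+1}{2}$ forces $r_{m-1}(k)=1$, evaluates the threshold formula to get $n'=\ceil{\frac{m(k+1)-1}{2}}\ge n$, and then invokes the monotonicity of \cref{lemma_ordering}. You bypass \cref{lemma_efficiency} entirely and exhibit an explicit feasible point $(A_1,A_2)=(2n-mk,\,mk-n)$ of Katona's characterization \eqref{eqn_valid}, reducing everything to the single inequality $2n\ge m(2k+1-m)$, which your parabola argument settles and which makes transparent why $\binom{k+1}{2}$ is exactly the right threshold (it is the value of the parabola at the vertex-adjacent integers $m\in\{k,k+1\}$). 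Both arguments ultimately rest on Theorem~1 of \cite{Katona}, but yours is self-contained and shorter for this particular lemma, while the paper's reuses the general threshold machinery it needs anyway for \cref{thm_minimum_full}. Your Type-I step is also more careful than the paper's one-line appeal to \cref{lemma_type12_relations}: the reduction $q_{\mathrm{I}}(n,k)=q_{\mathrm{II}}(n-1,k)$ requires $n-1\ge 2k-1$, and your squeeze between \cref{lemma_fixedk} and the Type-II equality cleanly handles the boundary value $n=\binom{k+1}{2}$ (e.g.\ $k=2$, $n=3$), which the paper glosses over. Two minor points: \eqref{eqn_valid} as printed writes $A_w\in\mathbb{Z}^+$, so you should note (as the paper implicitly does in \cref{lemma_efficiency}) that zero values are permitted; and your fallback degree-sequence realization, while not needed once \eqref{eqn_valid} is accepted, correctly covers the $m=k$ edge case with $K_k$ plus the $k$ weight-one columns.
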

\begin{proof}
   Consider the Type-II bound. Setting $m = \ceil{\frac{2n}{k+1}}$, observe that $\Vol_2(1,m-1) =  m \geq \frac{2\binom{k+1}{2}}{k+1} = k$, so $r_{m-1}(k) = 1$. By \cref{lemma_efficiency}, $q_\mathrm{II}(n', k) = m$, where $n' = \Vol_2(1,m)-1 + \floor{\frac{m(k-\Vol_2(0,m-1))}{1+1}} =  m + \floor{\frac{m(k-1)}{2}}$, which can also be written as $n' = m-1 + \ceil{\frac{m(k-1)+1}{2}} = \ceil{\frac{m(k+1)-1}{2}}$. Moreover, $m = \ceil{\frac{2n}{k+1}} \geq \frac{2n}{k+1}$, so $n' \geq \ceil{\frac{\frac{2n}{k+1}(k+1) -1}{2}} = \ceil{n-\frac{1}{2}}=n$. By \cref{lemma_ordering}, $q_\mathrm{II}(n,k) \leq  m$. 

Finally, by \cref{lemma_fixedk}, $q_{\mathrm{II}}(n,k) = m$.  The respective expression for Type-I follows immediately from \cref{lemma_type12_relations} since $n \geq \binom{k+1}{2}\geq2k-1$.
\end{proof}

\begin{corollary}
    $q_{\mathrm{I}}(\binom{k+1}{2}p+1,k) = q_{\mathrm{II}}(\binom{k+1}{2}p,k) = kp$ \hfill $\square$ for any $p \in \mathbb{Z}^+$.
\end{corollary}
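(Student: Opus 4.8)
The plan is to read off both equalities as immediate specializations of \cref{lemma_ceil}; the only genuine content is an arithmetic simplification of the ceiling expressions together with a one-line check that the hypotheses apply.

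I would begin with the Type-II value. Set $n := \binom{k+1}{2}p$. Since $p \ge 1$ we have $n \ge \binom{k+1}{2}$, so \cref{lemma_ceil} gives $q_{\mathrm{II}}(n,k) = \ceil{\tfrac{2n}{k+1}}$. Now
\[
\frac{2n}{k+1} \;=\; \frac{2}{k+1}\cdot\frac{k(k+1)}{2}\,p \;=\; kp ,
\]
which is already an integer, so the ceiling collapses and $q_{\mathrm{II}}\left(\binom{k+1}{2}p,\,k\right) = kp$. For the Type-I value there are two equivalent routes. The direct one is to apply the Type-I half of \cref{lemma_ceil} to $n' := \binom{k+1}{2}p + 1$: since $n' \ge \binom{k+1}{2}$, it yields $q_{\mathrm{I}}(n',k) = \ceil{\tfrac{2n'-2}{k+1}} = \ceil{\tfrac{2\binom{k+1}{2}p}{k+1}} = \ceil{kp} = kp$. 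Alternatively, one checks $n = \binom{k+1}{2}p \ge 2k-1$ --- which follows from $(k-1)(k-2) \ge 0$ --- and invokes the second branch of \cref{lemma_type12_relations} to obtain $q_{\mathrm{I}}(n+1,k) = q_{\mathrm{II}}(n,k) = kp$, reusing the computation above.

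Because the entire argument is substitution into already-established lemmas, I do not anticipate any real obstacle; the only points requiring care are the bookkeeping that $\tfrac{2\binom{k+1}{2}}{k+1} = k$ is an integer (so the ceilings are vacuous) and that both $\binom{k+1}{2}p$ and $\binom{k+1}{2}p+1$ clear the thresholds $\binom{k+1}{2}$ and $2k-1$ demanded by \cref{lemma_ceil,lemma_type12_relations} for every $p \ge 1$.
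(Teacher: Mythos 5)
Your proposal is correct and matches the paper's intent exactly: the corollary is stated without proof as an immediate specialization of \cref{lemma_ceil}, and your substitution $\tfrac{2}{k+1}\cdot\binom{k+1}{2}p = kp$ (an integer, so the ceilings vanish) together with the check that $\binom{k+1}{2}p$ and $\binom{k+1}{2}p+1$ meet the threshold $\binom{k+1}{2}$ is precisely the intended argument. The alternative route via \cref{lemma_type12_relations} is also valid but unnecessary.
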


\begin{corollary}
    $q_{\mathrm{I}}(n,2)= \ceil{\frac{2n-2}{3}}, \; q_{\mathrm{II}}(n,2) = \ceil{\frac{2n}{3}}$ \hfill $\square$
\end{corollary}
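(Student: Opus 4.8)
The plan is to read off this corollary as the $k=2$ specialization of \cref{lemma_ceil}. First I would observe that $\binom{k+1}{2}=\binom{3}{2}=3$, so the hypothesis $n\ge\binom{k+1}{2}$ of \cref{lemma_ceil} becomes simply $n\ge 3$. Since the $(n,k)$-LCCP is only defined for drawing size $k\in\{1,\dots,n-1\}$, every admissible instance with $k=2$ automatically satisfies $n\ge 3$. Hence \cref{lemma_fixedk} holds with equality for all such $n$, and substituting $k=2$ into its two bounds gives $q_{\mathrm{I}}(n,2)=\ceil{\frac{2n-2}{3}}$ and $q_{\mathrm{II}}(n,2)=\ceil{\frac{2n}{3}}$.

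For a self-contained argument one could instead combine the lower bound of \cref{lemma_fixedk} (with $k=2$: in Type-I use $A_0\le 1$, $A_1\le m$ to get $2n\le 3m+2$; in Type-II use $A_0=0$, $A_1\le m$ to get $2n\le 3m$) with a matching construction. The construction uses a draw matrix with weight-$2$ rows whose columns are $a$ distinct weight-$1$ vectors together with $b$ distinct weight-$2$ vectors (and, in Type-I, additionally the zero column), where $a,b$ are chosen so that the row-weight budget $a+2b=2m$ is met and the total number of columns equals $n$; this is feasible precisely when $m=\ceil{\frac{2n}{3}}$ (Type-II) or $m=\ceil{\frac{2n-2}{3}}$ (Type-I), which is exactly the $r_{m-1}(k)=1$ case already worked out inside the proof of \cref{lemma_ceil}.

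I expect no real obstacle here: the only point needing care is the admissibility observation $n\ge 3$ that lets us invoke \cref{lemma_ceil} uniformly over all valid $n$, together with the elementary rewriting of the $\floor{m(k-1)/2}$-type expressions from \cref{lemma_ceil} into the stated ceilings — both routine.
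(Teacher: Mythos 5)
Your proposal is correct and matches the paper's intent exactly: the corollary is stated as an immediate specialization of \cref{lemma_ceil} to $k=2$, where $\binom{3}{2}=3$ and the admissibility constraint $k<n$ guarantees $n\ge 3$, so the hypothesis of \cref{lemma_ceil} always holds. The additional self-contained construction you sketch is a faithful unpacking of the $r_{m-1}(k)=1$ case already inside the paper's proof of \cref{lemma_ceil}, so nothing further is needed.
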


\cref{thm_minimum_full} fully characterizes $q_{\mathrm{I}}(n,k)$ and $q_{\mathrm{II}}(n,k)$. 

\begin{theorem}
    \label{thm_minimum_full}
    Given any $k \in \mathbb{Z}^+$, define $m_{k,\ell}$ and $n_{k,\ell}$ for $\ell \in \mathbb{Z}^+$ as in \cref{lemma_efficiency}.
    \begin{enumerate}
        \item[(a)] If $n \geq \binom{k+1}{2}$, then $q_{\mathrm{I}}(n+1,k) = q_\mathrm{II} (n,k) = \ceil{\frac{2n}{k+1}}$.
        
        \item[(b)] If $2k-1 \leq n_{k,\ell-1} < n\leq n_{k,\ell} \leq \binom{k+1}{2}$ for some $\ell$, then $q_\mathrm{I}(n+1,k) =q_{\mathrm{II}} (n,k) =   m_{k,\ell}$.
        \item [(c)] If $k < n<2k$, then $q_\mathrm{I}(n,k) {=} q_\mathrm{II}(n,k) {=} q_\mathrm{I}(n,n-k)$. 
    \end{enumerate}
\end{theorem}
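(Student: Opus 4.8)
The plan is to obtain all three parts by assembling results already established in this section; no new construction is required. Part (a) is almost immediate: since $n \ge \binom{k+1}{2}$, \cref{lemma_ceil} gives $q_{\mathrm{II}}(n,k) = \ceil{\frac{2n}{k+1}}$, and since $\binom{k+1}{2} \ge 2k-1$ for every $k \ge 1$, \cref{lemma_type12_relations} gives $q_{\mathrm{I}}(n+1,k) = q_{\mathrm{II}}(n,k)$, which is exactly the claim.

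For part (b) I would first note that $n \ge 2k-1$ throughout (it exceeds $n_{k,\ell-1} \ge 2k-1$ when $\ell \ge 2$, and equals or exceeds $2k-1$ when $\ell = 1$), so \cref{lemma_type12_relations} reduces the claim to showing $q_{\mathrm{II}}(n,k) = m_{k,\ell}$. The upper bound comes from monotonicity: \cref{lemma_efficiency} gives $q_{\mathrm{II}}(n_{k,\ell},k) = m_{k,\ell}$, and \cref{lemma_ordering}, which makes $q_{\mathrm{II}}(\cdot,k)$ nondecreasing on the integers $\ge 2k-1$, then yields $q_{\mathrm{II}}(n,k) \le q_{\mathrm{II}}(n_{k,\ell},k) = m_{k,\ell}$ since $2k-1 \le n \le n_{k,\ell}$. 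For the lower bound I would use the maximality property established inside the proof of \cref{lemma_efficiency}: $n_{k,\ell-1}$ is the largest $n$ for which $[n]$ admits a separating-and-covering system of $m_{k,\ell-1}$ draws of size $k$. Since appending an arbitrary $k$-subset to such a system keeps it both separating and covering, the set of admissible draw-counts for a fixed number of coupons is upward closed; hence $n > n_{k,\ell-1}$ forces $q_{\mathrm{II}}(n,k) > m_{k,\ell-1}$, i.e.\ $q_{\mathrm{II}}(n,k) \ge m_{k,\ell-1}+1 = m_{k,\ell}$. The base layer $\ell = 1$, having no predecessor threshold, is handled separately: there $n \ge 2k-1$, and monotonicity together with \cref{lemma_logbound}(c) gives $q_{\mathrm{II}}(n,k) \ge q_{\mathrm{II}}(2k-1,k) = \ceil{\log_2 2k} = m_{k,1}$. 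Combining the two bounds gives $q_{\mathrm{II}}(n,k) = m_{k,\ell}$, and passing back through \cref{lemma_type12_relations} gives $q_{\mathrm{I}}(n+1,k) = m_{k,\ell}$.

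Part (c) is immediate from the first case of \cref{lemma_type12_relations}, which gives $q_{\mathrm{II}}(n,k) = q_{\mathrm{I}}(n,k)$ when $k < n < 2k$, together with \cref{lemma_symmetry}, which gives $q_{\mathrm{I}}(n,k) = q_{\mathrm{I}}(n,n-k)$. The only genuinely delicate step in the whole proof is the lower bound in part (b): one must convert the fact that $n$ exceeds the threshold $n_{k,\ell-1}$ into a bound on the minimum $q_{\mathrm{II}}(n,k)$ rather than merely a statement about the single value $m_{k,\ell-1}$, which is precisely what the upward-closure (padding) observation secures, and one must dispatch $\ell=1$ by hand using \cref{lemma_logbound}(c). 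It is also worth checking that (a)--(c) jointly account for every $n > k$: (c) covers $k < n \le 2k-1$, (a) covers $n \ge \binom{k+1}{2}$, and (b), read with $\ell$ ranging over all values $\ge 1$ and the $\ell=1$ layer anchored at $n = 2k-1$ as above, covers the remaining values, the intervals $(n_{k,\ell-1},\,n_{k,\ell}]$ being nonempty and strictly increasing by \cref{lemma_efficiency}.
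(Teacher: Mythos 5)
Your proof is correct and follows essentially the same route as the paper's: part (a) from \cref{lemma_ceil}, part (c) from \cref{lemma_type12_relations} and \cref{lemma_symmetry}, and part (b) by combining the monotonicity of \cref{lemma_ordering} (upper bound) with the maximality of $n_{k,\ell-1}$ established inside the proof of \cref{lemma_efficiency} (lower bound). Your explicit upward-closure (padding) justification of the step from ``no system with exactly $m_{k,\ell-1}$ draws'' to $q_{\mathrm{II}}(n,k) > m_{k,\ell-1}$, and your separate treatment of the $\ell=1$ layer via \cref{lemma_logbound}(c), merely spell out details the paper leaves implicit.
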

\begin{proof}
    (a) follows from \cref{lemma_ceil}, and (c) follows from \cref{lemma_type12_relations,lemma_symmetry}. If $n$ falls in (b), then by the proof of \cref{lemma_efficiency}, 
    $q_\mathrm{II}(n,k) > m_{k,\ell-1} = m_{k,\ell}-1$, 
  and by \cref{lemma_ordering},  $q_\mathrm{II}(n,k) \leq q_{\mathrm{II}}(n_{k,\ell},k) = m_{k,\ell}$, so $q_\mathrm{I}(n+1,k) =q_{\mathrm{II}} (n,k) =   m_{k,\ell}$.
    \end{proof}
\begin{remark}
\label{remark_min}
    Theorem~\ref{thm_minimum_full} can be used to compute $q_\mathrm{I}(n,k)$ and $q_\mathrm{II}(n,k)$ for any $n,k$.
   Ongoing work is directed at simplifying the computation of the thresholds $n_{k,\ell}$.  Interestingly, \cite{Katona} hinted at the concept of minimum Hamming radius, and here we used it to fully characterize the minimum number of draws required for any $n,k$.
   \hfill $\square$
\end{remark}

\section{Expected Number of Draws}

In this section, we study the expected number of draws in the LCCP using a Markov chain model that is
\begin{itemize}
    \item \textit{absorbing}: there exists a set of absorbing states $\cA$ that cannot be left once entered;
    \item \textit{unidirectional}: no state can be re-entered.
\end{itemize}

In an absorbing Markov chain, the \textit{expected time to absorption} (ETA) of a transient (non-absorbing) state is the expected number of transitions to reach an absorbing state. Moreover, if $P = (p_{ij})$ is the transition probability matrix, then the ETA $\mu_i$ for state $i$  is $\mu_i = 1+ \sum\limits_{j \notin \cA} p_{ij} \mu_j$  and 
\begin{equation}
\label{eqn_expected}
    (I-P_{\mathrm{tran}})\mu = \textbf{1}
\end{equation}
where $\mu$ is the vector of ETA values, $P_{\mathrm{tran}}$ is the submatrix of $P$ restricted to transient states, and \textbf{1} is the all-ones vector~\cite{KS76}.

For the LCCP, we consider a unidirectional chain with exactly one initial state and one absorbing state.
%corresponding to the problem's termination and one initial state corresponding the problem's start, and so 
The expected number of draws will be the ETA of the initial state, which can be found by solving \eqref{eqn_expected}. Note that here $P_{\mathrm{tran}}$ is triangularizable due to unidirectionality. 
%It is also worth noting that other key measures such as the moments and distribution of $Q(n,k)$ can also be found by analyzing $P_{\mathrm{tran}}$.

%The remaining question is how states should be defined. 
In the classical CCP, states can represent the number of unique coupons that have been collected, and the resulting Markov chain is shown in \cref{markovccp} (omitting self-loops). The ETA of the initial state is known to be $n H_n$. 

% In the LCCP, the goal is to determine the label of every coupon, so instead of recording the number of coupons collected, we record the number of coupons with $\ell$ possible known labels for $\ell \in \mathbb{Z}^+$, denoted by $s_\ell$. From \cref{remark_deducing}, in the Type-I (resp. Type-II) setting, $s_\ell$ is the number of coupons whose separation class has cardinality $\ell$ (resp. and additionally has been drawn at least once).

 In the $(n,k)$-LCCP, the states record the number, $s_\ell$, of coupons with $\ell$ possible known labels. Referring to \cref{remark_deducing}, in the Type-I setting, $s_\ell$ is the number of coupons whose separation class has cardinality~$\ell$. In the Type-II setting, they must also have been drawn at least once.
\begin{figure}
    \centering
\includegraphics[width=0.8\linewidth]{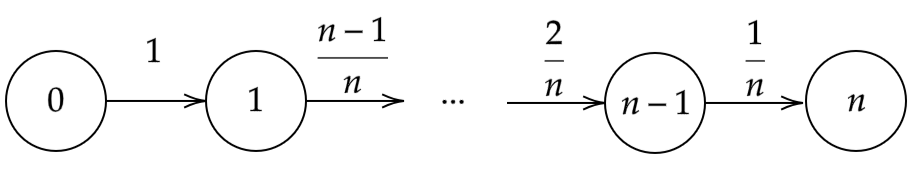}
    \caption{Markov chain graph (self-loops omitted) for the classical CCP. }
    \label{markovccp}
\end{figure}

Define a relation $\sim$ on $[n]$ such that $j_1 \sim j_2 $ if $j_1, j_2$ have been drawn and $c(j_1) = c(j_2)$, or if neither of $j_1, j_2$ has been drawn. Observe that $\sim$ is an equivalence relation, the separation classes are equivalence classes, and $[n]$ can be partitioned into disjoint separation classes and possibly a class of coupons that have not been drawn. Therefore, we represent each state as an integer $k$-tuple $(s_1,...,s_k)$ satisfying the linear Diophantine inequality:
\begin{align}
    \sum\limits_{\ell=1}^k s_\ell \leq n \\
    \ell | s_\ell \; \forall \; \ell\in \{1,2,...,k\},
\end{align}
where $s_\ell$ is the number of separation classes of size $\ell$. The only difference between the Type-I and Type-II settings is that the ``undrawn'' separation class can sometimes be counted among the $s_\ell$'s in the Type-I setting.

% To construct the Markov chain, we iteratively find the states from the initial state to the final state, given by $(0,0,...,0)$ and $(n,0,...,0)$, respectively, by determining all ways of drawing subsets. Since subsets are drawn uniformly, then transition probabilities can be found by normalizing the number of ways to transition by $\binom{n}{k}$. \cref{tpm} in the Appendix lists all possible transitions and their number of ways starting at state $(s_1,s_2)$ for when $k = 2$, and \cref{markovlccp62} shows the graph of the Markov chain for the $(6,2)$-LCCP. 

To construct the Markov chain, we iteratively find the states, starting  from the initial state $(0,0,...,0)$ and ending in the final state $(n,0,...,0)$, by determining all ways of drawing subsets and creating state transitions. Since subsets are drawn uniformly at random, the transition probabilities can be found by normalizing the number of ways to transition from one state to another by $\binom{n}{k}$. \cref{tpm} lists all possible state transitions for the $\text{Type-I}$ and Type-II problems in the $(n,2)$-LCCP, which can be used to construct the corresponding Markov chains. An example is given in \cref{markovlccp62} for the $(6,2)$-LCCP.

\begin{figure}[t]
    \centering
    \includegraphics[width=1\linewidth]{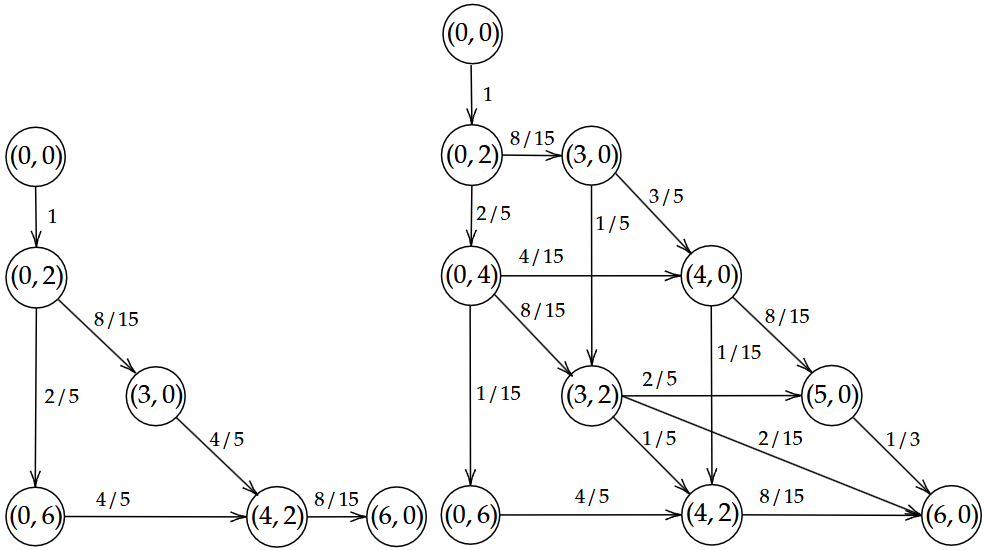}
    \caption{Markov chain models for the $(6,2)$-LCCP (left: Type-I, right: Type-II); self-loops are omitted.}
    \label{markovlccp62}
\end{figure}

%\textbf{Table~II}: 
\begin{table}[ht]  
    \caption{$(n,2)$-LCCP Markov chain state transitions \\(top: Type-I, bottom: Type-II)} 
    \centering
    \begin{tabular}{|c|c|c|} \hline 
         New State&  Requirements& No. of ways\\ \hline 
         $(s_1,s_2)$ & $s_1 \geq 2$ or $s_2 \geq 2$ & $\binom{s_1}{2} + \frac{s_2}{2}$ \\ \hline

         $(s_1 + 2, s_2 - 2)$ & $s_1 \geq 1, s_2 \geq 2$ & $s_1s_2$ \\ \hline 

         $(s_1 + 1, s_2)$ & $s_1 \geq 1, s_1 + s_2 < n-3$ & $s_1(n-s_1 - s_2)$ \\ \hline 

         $(s_1 + 4, s_2-4)$ & $s_2 \geq 4$ & $\frac{s_2(s_2-2)}{2}$ \\ \hline

         $(s_1 + 3, s_2 - 2)$ & $s_2 \geq 2, s_1 + s_2 < n-3$ & $ s_2(n-s_1-s_2)$\\ \hline

         $(s_1 + 3, s_2)$ & $s_2 \geq 2, s_1 + s_2 = n-3$ & $ s_2(n-s_1-s_2)$\\ \hline

         $(s_1, s_2 + 2) $ & $s_1+s_2 < n-4$ & $\binom{n-s_1-s_2}{2}$ \\ \hline 

         $(s_1, s_2 + 4) $ & $s_1+s_2 = n-4$ & $\binom{n-s_1-s_2}{2}$ \\ \hline 

         $(s_1+1, s_2 + 2)$ & $s_1\geq 1, s_1+s_2=n-3$ & $3(1+s_1)$ \\\hline 

    \end{tabular}
    
    \vspace{5pt}
    
    \begin{tabular}{|c|c|c|} \hline 
         New State&  Requirements& No. of ways\\ \hline 
         $(s_1,s_2)$ & $s_1 \geq 2$ or $s_2 \geq 2$ & $\binom{s_1}{2} + \frac{s_2}{2}$ \\ \hline

         $(s_1 + 2, s_2 - 2)$ & $s_1 \geq 1, s_2 \geq 2$ & $s_1s_2$ \\ \hline 

         $(s_1 + 1, s_2)$ & $s_1 \geq 1$ & $s_1(n-s_1 - s_2)$ \\ \hline 

         $(s_1 + 4, s_2-4)$ & $s_2 \geq 4$ & $\frac{s_2(s_2-2)}{2}$ \\ \hline

         $(s_1 + 3, s_2 - 2)$ & $s_2 \geq 2, s_1 + s_2 \leq n-1$ & $ s_2(n-s_1-s_2)$\\ \hline

         $(s_1, s_2 + 2) $ & $s_1+s_2 \leq n-2$ & $\binom{n-s_1-s_2}{2}$ \\ \hline        
    \end{tabular}
    \label{tpm}
    \vspace{2pt}
\end{table}

\begin{table} 
\caption{Expected number of draws for small $n$ when $k = 2$} 
\vspace{2pt}
    \centering
    \begin{tabular}{|c|c|c|c|c|} \hline 
         $n$&  $3$&  $4$&  $5$&  $6$\\ \hline 
         $\mathbb{E}[Q_{\mathrm{I}}(n,2)]$&  $\frac{5}{2}$&  $\frac{5}{2}$&  $\frac{47}{12}$&  $\frac{291}{56}$\\[0.5ex] 
        % & & & & \\[-3ex]
         \hline 
         $\mathbb{E}[Q_{\mathrm{II}}(n,2)]$&  $\frac{5}{2}$&  $\frac{41}{10}$&  $\frac{121}{21}$&  $\frac{1817}{247}$\\ [0.5ex]
         \hline
    \end{tabular}
    \label{table_small_n}
    \vspace{2pt}
\end{table}
%This Markov chain model allows for an intuitive visualization of the LCCP as a generalization of the classical CCP. In particular, for $k = 2$, every state is represented with an ordered pair, making the LCCP in some sense a two-dimensional extension of the classical CCP. An example of this is shown in  \textbf{\cref{markovlccp62}} with the $(6,2)$-LCCP.

%Hence, one way that the expected number of draws for the $(n,k)$-LCCP can be found numerically is by constructing the Markov chain as outlined in this section and solving the system of linear equations as discussed in Section II. 

%\newpage
%\noindent
Finally, we present some results for the $(n,2)$-LCCP obtained using this approach. \cref{table_small_n} gives the numerically 
computed values of $\E[Q_\mathrm{I}(n,2)]$ and $\E[Q_\mathrm{II}(n,2)]$ for the first few values of $n$, and \cref{plot_expectations} gives normalized expectations $\frac{1}{n}\E[Q_\mathrm{I}(n,2)]$ and $\frac{1}{n}\E[Q_\mathrm{II}(n,2)]$ for selected values of $n$, alongside a plot of the sequence $\frac{1}{2}H_n$ as a reference.

\begin{figure}[t!]
    \centering
    \includegraphics[width=1\linewidth]{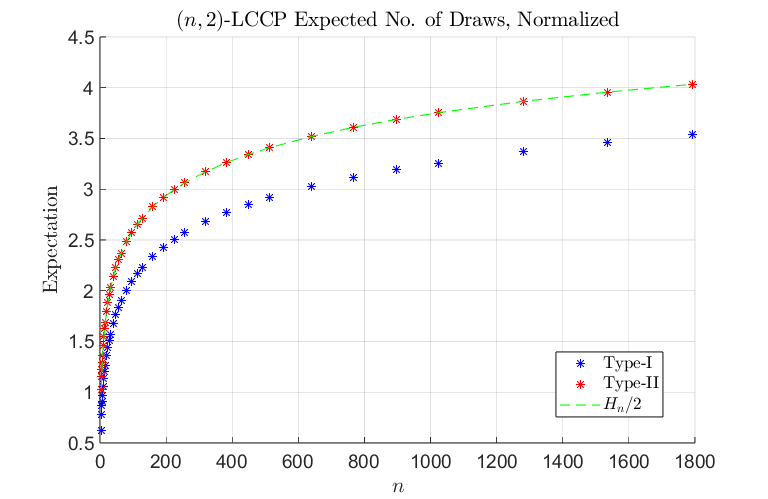}
    \caption{Expected number of draws for the $(n,2)$-LCCP computed via Markov chain analysis, normalized by $n$.}
    \label{plot_expectations}
\end{figure}

%\newpage
\cref{table_small_n} and \cref{plot_expectations} suggest that $\E[Q_\mathrm{II}(n,2)] \geq \E[Q_\mathrm{I}(n,2)] $, as well as 
the following conjectures. 
\begin{conjecture}\label{conj:exp-type-diff}
    $\E[Q_\mathrm{II}(n,2)] - \E[Q_\mathrm{I}(n,2)] = \frac{1}{2}n + o(n)$.
\end{conjecture}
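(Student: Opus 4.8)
The plan is to reduce the conjecture to two asymptotic facts about the coverage process of a random multigraph, one elementary and one requiring a concentration argument. By \cref{thm_deducing}, Type-I recovery occurs exactly when the drawn sets separate $[n]$, while Type-II recovery occurs exactly when they separate \emph{and} cover $[n]$; since both ``separating'' and ``covering'' are monotone (preserved under adjoining further draws) and absorbing, the first trial at which their conjunction holds is the maximum of the two individual first trials:
\begin{equation*}
  Q_{\mathrm{II}}(n,2) \;=\; \max\{\,Q_{\mathrm{I}}(n,2),\ \tau\,\},
  \qquad
  \tau := \min\{\,m : \textstyle\bigcup_{i\le m} S_i = [n]\,\},
\end{equation*}
so $\E[Q_{\mathrm{II}}(n,2)] - \E[Q_{\mathrm{I}}(n,2)] = \E[(\tau - Q_{\mathrm{I}}(n,2))^{+}]$. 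Identify the draws with i.i.d.\ uniformly random edges of a multigraph $G_m$ on $[n]$ (parallel edges are irrelevant to separation); a direct check via \cref{lemma_separable} shows that, for $k = 2$, the draws separate $[n]$ iff $G_m$ has at most one isolated vertex and no connected component consisting of a single edge (an ``isolated $K_2$''). Let $U_m$ be the number of never-drawn coupons (isolated vertices of $G_m$), put $T := \min\{m : U_m \le 1\}$, and note that $\tau \ge T$ and also $Q_{\mathrm{I}}(n,2) \ge T$ (two uncovered coupons are mutually inseparable). Hence
\begin{equation*}
  \tau - Q_{\mathrm{I}}(n,2) \;\le\; \big(\tau - Q_{\mathrm{I}}(n,2)\big)^{+} \;\le\; \tau - T,
\end{equation*}
and, since all these expectations are finite (of order $n\ln n$), the conjecture reduces to the two estimates: (i) $\E[\tau - T] = \tfrac12 n + o(n)$, and (ii) $\E[Q_{\mathrm{I}}(n,2) - T] = o(n)$.

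For (i), apply the strong Markov property at $T$. The only way to have $U_T = 0$ is for a single trial to carry the number of uncovered coupons from $2$ to $0$; conditioned on a trial changing $U_m$, this has probability $1/(2n-3)$, so $\Pr[U_T = 0] = O(1/n)$. Conditioned on $U_T = 1$, the residual time $\tau - T$ is geometric with success probability $\binom{n-1}{1}\big/\binom{n}{2} = 2/n$, hence has mean $n/2$; conditioned on $U_T = 0$ it is $0$. Thus $\E[\tau - T] = \tfrac{n}{2}\Pr[U_T = 1] = \tfrac{n}{2} + O(1)$.

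Estimate (ii) is the crux. The key point is that once $U_m \le 1$ no new isolated $K_2$ can arise (it would require two simultaneously isolated vertices), and a component equal to an isolated $K_2$ at a time $m \ge T$ was already one at time $T$; thus $X_m := \mathbf{1}[G_m \text{ contains an isolated } K_2]$ is non-increasing in $m$ on $\{m \ge T\}$, giving $Q_{\mathrm{I}}(n,2) - T = \sum_{m \ge T} X_m$ and $\E[Q_{\mathrm{I}}(n,2) - T] = \sum_{m\ge 0}\Pr[T \le m,\ X_m = 1]$. I would bound this sum by splitting at $M_0 := c\,n\ln n$ for a fixed $c \in (\tfrac13,\tfrac12)$ (e.g.\ $c = \tfrac38$). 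For $m \le M_0$: bound the summand by $\Pr[U_m \le 1] \le \Pr[U_{M_0} \le 1]$, and since $\E[U_{M_0}] = n(1-2/n)^{M_0} = \Theta(n^{1-2c})$ is polynomially large and the uncovered-coupon indicators are negatively associated (as for standard occupancy problems), a lower-tail Chernoff bound gives $\Pr[U_{M_0}\le 1] \le e^{-\Omega(n^{1-2c})}$, whence this portion of the sum is $\le M_0\,e^{-\Omega(n^{1-2c})} = o(1)$. For $m > M_0$: bound the summand by the first moment, $\Pr[X_m = 1] \le \E\big[\#\{\text{isolated }K_2\text{'s in }G_m\}\big] = \binom n2\big[\bigl(q + \tfrac{1}{\binom n2}\bigr)^m - q^m\big] \le O(m)\,e^{-3m/n}$ with $q := \binom{n-2}{2}\big/\binom{n}{2}$, and summing this linear-times-geometric tail from $M_0$ gives $O(n^2\ln n)\,e^{-3M_0/n} = O(n^{2-3c}\ln n) = o(n)$. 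Combining, $\E[Q_{\mathrm{I}}(n,2) - T] = o(n)$, and the whole argument in fact yields $\E[Q_{\mathrm{II}}(n,2)] - \E[Q_{\mathrm{I}}(n,2)] = \tfrac12 n + O(n^{7/8}\ln n)$.

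The main obstacle is the concentration input for (ii): a quantitative lower-tail bound on $U_m$ — equivalently, that with overwhelming probability the coverage time $\tau$ is not much below its typical value $\tfrac12 n\ln n$ — paired with first-/second-moment control of the isolated-$K_2$ count, and choosing $M_0$ so that both regimes are $o(n)$; the rest is routine manipulation. I expect the same scheme to extend to general $k$, replacing $G_m$ by a random $k$-uniform multi-hypergraph and ``isolated $K_2$'' by the relevant minimal inseparable configurations, at the cost of a more elaborate enumeration.
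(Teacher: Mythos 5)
The paper does not actually prove \cref{conj:exp-type-diff}: it is stated as a conjecture supported only by the numerically computed expectations in Table~II and Figure~4, so there is no proof of record to compare against. Your proposal is therefore a genuinely new argument, and as a strategy it appears sound. The decomposition $Q_{\mathrm{II}}=\max\{Q_{\mathrm{I}},\tau\}$ is legitimate because separation and covering are both monotone under adjoining draws (\cref{thm_deducing}), and your graph-theoretic characterization of separation failure for $k=2$ (two or more isolated vertices, or an isolated-$K_2$ component) checks out against \cref{lemma_separable}; the key structural observation that no new isolated $K_2$ can form once at most one vertex remains uncovered is correct and is what makes $\E[Q_{\mathrm{I}}-T]$ tractable as a telescoping sum of indicator probabilities. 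I verified the quantitative pieces: $\Pr[U_T=0]\le 1/(2n-3)$ since $U$ can only jump from $2$ to $0$ by drawing the unique edge between the two last uncovered coupons; the residual covering time given $U_T=1$ is geometric with mean $n/2$; the first-moment bound $\binom{n}{2}\bigl[(q+1/\binom{n}{2})^m-q^m\bigr]\le m(q+p)^{m-1}$ follows from the mean value theorem; and the split at $M_0=cn\ln n$ with $c\in(\tfrac13,\tfrac12)$ makes both regimes $o(n)$. The two steps you flag as needing care are indeed the only ones requiring real work, and both are standard: negative association of the uncovered-coupon indicators (each draw is a fixed-size random subset, so the per-draw inclusion indicators are negatively associated, and independent concatenation plus coordinatewise monotone maps preserve this), which licenses the lower-tail Chernoff bound on $U_{M_0}$. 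What your approach buys over the paper is substantial: it would upgrade the conjecture to a theorem with an explicit error term $O(n^{7/8}\ln n)$, it cleanly isolates \emph{why} the gap is $\tfrac12 n$ (it is essentially the expected time to cover the last coupon), and, as you note, the scheme plausibly generalizes to $k>2$ by replacing the isolated $K_2$ with the minimal inseparable hypergraph configurations. Written out in full, this would be a worthwhile addition to the paper rather than a rederivation of anything in it.
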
\label{conj:exp-type-two-2}
\vspace{-2pt}
\begin{conjecture} $\E[Q_\mathrm{II}(n,2)] = \frac{1}{2}nH_n +o(n)$. \hfill $\square$
\end{conjecture}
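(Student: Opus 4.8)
The plan is to reformulate $Q_\mathrm{II}(n,2)$ as a hitting time for a random multigraph process and compare it with the classical $k=2$ group-drawing coupon collector. Identify each draw with an edge: after $m$ draws let $G_m$ be the multigraph on $[n]$ whose $i$-th edge is the uniformly random pair $S_i$. By the draw-matrix characterization in Section~II, $\{S_i\}_{i=1}^m$ separates and covers $[n]$ iff the columns of its draw matrix are distinct and nonzero, which for $k=2$ is easily seen to be equivalent to: every connected component of $G_m$ has at least three vertices (no isolated vertex and no isolated edge). Hence $Q_\mathrm{II}(n,2)=\min\{m:\ G_m\text{ has all components of size }\ge 3\}$. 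Let $T_{\mathrm{cov}}$ be the first time $G_m$ has no isolated vertex; then $Q_\mathrm{II}(n,2)\ge T_{\mathrm{cov}}$, and $T_{\mathrm{cov}}$ is exactly the completion time of the $k=2$ group-drawing coupon collector, so $\E[T_{\mathrm{cov}}]=\tfrac12 nH_n+O(1)$ by the formula quoted in the Introduction. This already gives the lower bound $\E[Q_\mathrm{II}(n,2)]\ge\tfrac12 nH_n-o(n)$, so the whole problem reduces to showing $\E[Q_\mathrm{II}(n,2)-T_{\mathrm{cov}}]=o(n)$.

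Because ``no isolated vertex'' is a monotone property, after time $T_{\mathrm{cov}}$ the only obstruction to termination is an isolated edge, and any isolated edge present at a time $\ge T_{\mathrm{cov}}$ was already one at $T_{\mathrm{cov}}$. Write $C$ for the number of isolated edges in $G_{T_{\mathrm{cov}}}$. Using the strong Markov property at the stopping time $T_{\mathrm{cov}}$ — each subsequent draw resolves a fixed isolated edge (i.e.\ meets exactly one of its endpoints) with probability $\Theta(1/n)$, independently of $\mathcal F_{T_{\mathrm{cov}}}$ — a union bound over pairs gives
\[
\Pr\!\big(Q_\mathrm{II}(n,2)-T_{\mathrm{cov}}>m\big)=\Pr\!\big(G_{T_{\mathrm{cov}}+m}\text{ has an isolated edge}\big)\;\le\;e^{-\Theta(m/n)}\,\E[C].
\]
Summing over $m\ge 0$ yields $\E[Q_\mathrm{II}(n,2)-T_{\mathrm{cov}}]\le O(n)\cdot\E[C]$, so it suffices to prove $\E[C]=o(1)$ — in fact $\E[C]=O(\log n/n)$.

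To bound $\E[C]=\sum_{\{a,b\}}\Pr(\{a,b\}\text{ is an isolated edge in }G_{T_{\mathrm{cov}}})$, fix a pair and condition on $\tau_a=\tau_b=:s$ (where $\tau_v$ is the first draw containing $v$), so that draw $s$ equals $\{a,b\}$ and the first $s-1$ draws are uniform pairs inside $[n]\setminus\{a,b\}$; let $R_s$ be the number of those $n-2$ vertices left uncovered, so $\E[R_s]\asymp n\,e^{-2s/n}$. Then $\{a,b\}$ remains an isolated edge at $T_{\mathrm{cov}}$ iff the residual $R_s$ vertices get covered by draws avoiding $\{a,b\}$ before the first resolving draw; since resolving draws arrive at rate $\Theta(1/n)$ while covering $R_s$ vertices takes $\gtrsim\tfrac n2\ln R_s$ draws with high probability (the easy coupon-collection lower tail), this conditional probability is $\lesssim\min(1,R_s^{-2})$. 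Combining with $\sum_{\{a,b\}}\Pr(\tau_a=\tau_b=s)=\binom n2\cdot\tfrac1{\binom n2}(1-\tilde p)^{s-1}=(1-\tilde p)^{s-1}\approx e^{-4s/n}$, where $1-\tilde p=\Pr(\text{a draw misses both }a,b)$, we get $\E[C]\lesssim\sum_s e^{-4s/n}\,\E[\min(1,R_s^{-2})]$; splitting the sum at $s\asymp\tfrac n2\ln n$ (where $\E[R_s]$ crosses $1$) evaluates this to $O(\log n/n)$, which gives $\E[Q_\mathrm{II}(n,2)-T_{\mathrm{cov}}]=O(\log n)=o(n)$ and hence $\E[Q_\mathrm{II}(n,2)]=\tfrac12 nH_n+o(n)$.

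The main obstacle is the last estimate, namely $\E[C]=o(1)$. The naive bound $\Pr(\{a,b\}\text{ is an isolated edge at }T_{\mathrm{cov}})\le\Pr(\tau_a=\tau_b)=\Theta(1/n)$ loses a factor of $n$ and only gives $\E[Q_\mathrm{II}(n,2)-T_{\mathrm{cov}}]=O(n)$; one genuinely needs that a surviving isolated edge forces both endpoints to have been covered late \emph{and} to have escaped resolution for a long time. Making this rigorous requires controlling the joint law of $T_{\mathrm{cov}}$, the residual uncovered set $R_s$, and the resolving-edge arrivals after time $s$ — turning the heuristic ``rate $\Theta(1/n)$'' statements into honest exponential-tail bounds and coupling the residual covering time to a coupon-collection lower tail. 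The graph reduction, the monotonicity/strong-Markov step, and the lower bound are routine. (One could instead try to extract the asymptotics directly from the Markov chain of \cref{tpm}, but analyzing that recursion to $o(n)$ precision appears no easier.)
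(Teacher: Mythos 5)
First, a point of order: the paper does not prove this statement — it is posed as a conjecture supported only by the numerics in Table~II and Figure~4 — so there is no proof of record to compare yours against. That said, your reduction is correct and is a natural way to attack the problem: for $k=2$ the draw-matrix conditions (distinct, nonzero columns) are exactly ``no connected component of size $\le 2$'' in the multigraph of draws; the lower bound $\E[Q_\mathrm{II}(n,2)]\ge \E[T_{\mathrm{cov}}]=\tfrac12 nH_n+O(1)$ is immediate; the observation that any isolated edge present after $T_{\mathrm{cov}}$ was already isolated at $T_{\mathrm{cov}}$ is valid (components only grow, and at $T_{\mathrm{cov}}$ every vertex lies in a component of size $\ge 2$); and the strong-Markov/union-bound step yielding $\E[Q_\mathrm{II}(n,2)-T_{\mathrm{cov}}]\le O(n)\,\E[C]$ is sound.

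The gap is exactly where you flag it: the claim $\E[C]=o(1)$ is carried by a chain of $\asymp$, $\lesssim$, and ``whp'' assertions. In particular, the step $\Pr(\{a,b\}\text{ survives to }T_{\mathrm{cov}}\mid \tau_a=\tau_b=s)\lesssim\min(1,R_s^{-2})$ silently requires an exponential (not merely second-moment) lower-tail bound on the residual cover time \emph{and} a decoupling of the resolving draws from the covering draws; neither is supplied, so as written this is a proof program rather than a proof. The estimate is true, and can be closed more cleanly than by conditioning on $s$: with $m_0=\tfrac12 n(\ln n-2\ln\ln n)$, use that an edge isolated at $T_{\mathrm{cov}}\ge m_0$ and born by time $m_0$ is isolated at time $m_0$, so
\[
\Pr\bigl(\{a,b\}\text{ isolated at }T_{\mathrm{cov}}\bigr)\le \Pr(T_{\mathrm{cov}}<m_0)+\Pr\bigl(\{a,b\}\text{ isolated at }m_0\bigr)+\Pr(\tau_a=\tau_b>m_0).
\]
The last two terms are elementary and sum over the $\binom{n}{2}$ pairs to $O(\ln^5 n/n)$; the first needs $\Pr(T_{\mathrm{cov}}<m_0)=o(n^{-2})$, which follows from negative association of the coverage indicators (each draw is a sample without replacement), giving $\Pr(\text{all covered at }m_0)\le\bigl(1-(1-\tfrac2n)^{m_0}\bigr)^n\le e^{-(1+o(1))\ln^2 n}$. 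With that lemma your argument closes and in fact gives the sharper statement $\E[Q_\mathrm{II}(n,2)]=\tfrac12 nH_n+O(n^{\epsilon})$ for any $\epsilon>0$, strengthening the conjectured $o(n)$ error term; but until the tail bounds are made explicit, the proposal should be regarded as an (entirely plausible) outline rather than a proof of the conjecture.
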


Namely, for $k = 2$, the difference between the Type-II and Type-I expectations may be approximately $\frac{1}{2}n$ for large $n$, and the Type-II expectation closely matches that of the $k=2$ group drawing scenario mentioned in Section~\ref{sec:introduction}. 
It is intriguing that the harmonic number $H_n$ appears to play a role in the expectation for the LCCP.

%\begin{remark}  \cref{tpm} lists all possible state transitions for the $\text{Type-I}$ and Type-II problems in the $(n,2)$-LCCP. These can be used to construct the corresponding Markov chains. In ongoing work, we seek to provide concentration bounds on the expectation  and address how it scales with respect to the number of coupons.  \end{remark}

\begin{remark} In ongoing work, we seek to provide concentration bounds on the expectation  and address how it scales with respect to the number of coupons. 
\end{remark}
\vspace{2pt}
%similar to the results in \cite{Stadje,AR01}
\section{Discussion}

We introduced the LCCP as an extension of the CCP and included two settings where the label set is known or unknown in advance. We connected the LCCP to R\'enyi and Katona's problem of minimal separating systems to characterize the minimum required number of draws in both settings. We then provided insight about the expected number of draws using a Markov chain model that can be further analyzed in future work to compute other key statistics and the distribution of the required number of draws. Several directions for future research were indicated.
%We introduced the Labeled Coupon Collector Problem, an extension of the classical coupon collector problem, in which coupons and their associated labels must be collected using draws of size $k$ with labels received in random order. A connection was drawn between the LCCP and the problem of separating sets studied by R\'enyi and Katona. Building on their ideas, we completely characterized the minimum required number of draws in the two settings where the label set is known or unknown in advance.  We provided insight about the expected number of draws using a Markov chain model that can be further analyzed in future work to compute other key statistics and the distribution of the required number of draws. Several directions for future research were indicated.

\newpage

\newpage
\bibliographystyle{IEEEtran}

\begin{comment}

\end{comment}

%\bibliography{refs}

\newpage
\phantom{ }
\newpage

    \appendix
\noindent
\textbf{Proof of \cref{lemma_type12_relations}}:
    Suppose $n < 2k$. Let $m = q_{\mathrm{I}}(n,k)$ and $S_1,...,S_m \subseteq [n]$ be a separating system corresponding to $M \in \mathbb{F}_2^{m\times n}$, so each row of $M$ has weight $k$, and the columns are distinct. If $\{S_i\}_{i=1}^m$ covers $[n]$, then we can conclude that $q_{\mathrm{II}}(n,k) = q_{\mathrm{II}}(n,k)$. If not, assume, without loss of generality, that the last column of $M$ is the all-zero vector. By \cref{lemma_logbound}, $2^m \geq 2k > n$, so there exists a nonzero $v \in \mathbb{F}_2^m$ that is not a column of $M$. Our goal is to modify $M$ so that the last coupon is covered; this will be done by transferring $1$'s from the first $n-1$ columns to the last to turn it into $v$, while keeping the columns nonzero and distinct.

    Since this transfer only occurs in rows where $v_i = 1$, it is impossible for any of the first $n-1$ columns to become $v$. (Since they were already distinct from $v$, removing a $1$ from a position where $v_i = 1$ will not change that fact.) So it suffices to ensure that for all $i\in[m]$ where $v_i = 1$, we can select at least one column of weight greater than $1$ and whose $1$ in row $i$ can be transferred while keeping the first $n-1$ columns distinct. 

    For any such $i\in [m]$, the only way to create a duplicate column is to transfer a $1$ from a column that agrees with another column in every row except for row $i$, with the exception of the last column, as mentioned earlier. We claim that,
 for any $i\in [m]$, at least two columns of $M$ can have their $1$ in row $i$ transferred while keeping the columns distinct. Indeed, if this were not the case, then there would be at least $k-1$ columns with a $1$ in row $i$ matching a corresponding column that differs only in row $i$. This is impossible, since the number of columns (barring the last) with a $0$ in row $i$ is $n-k-1 < k-1$.

    Moreover, since the columns are distinct, at least one choice of column has a weight greater than $1$ for each $i\in[m]$ with $v_i =1$. Therefore, there is always at least one valid choice of a column for the transfer.

    Next, suppose $n \geq 2k-1$. It is clear that $q_\mathrm{I}(n+1,k) \leq q_{\mathrm{II}}(n,k)$, since a separating and covering system for $[n]$ is a separating system for $[n+1]$, as can be seen by adding an all-zeros column to the draw matrix. To prove the converse inequality, we start with a separating system for $[n+1]$ corresponding to $M \in \mathbb{F}_2^{m\times (n+1)}$. If $M$ has an all-zeros column, we can delete that column to obtain the desired result. Otherwise, assume that $M$ has no all-zeros column. The strategy is to transfer every $1$ from the last column while keeping the columns distinct, and then delete the resulting all-zeros last column. In each row, the difference between the number of $1$'s and $0$'s is $(n-(k-1)) - (k-1) = n-2k+2 \geq 1$, so using similar reasoning as before, we can show that there is always at least one valid choice of column to receive a $1$ from a given row in the last column.

     Finally, suppose $q_\mathrm{II}(n,k) > q_\mathrm{I}(n,k)$ for $n \geq 2k-1$. Let $m = q_{\mathrm{I}}(n,k)$ and $\{S_i \}_{i=1}^m \subseteq [n]$ be a separating system corresponding to draw matrix $M\in \mathbb{F}_2^{m\times n}$. By the hypothesis, $\{S_i\}_{i=1}^m$ does not cover $[n]$, so there will be exactly one all-zeros column in $M$, denoted as column $j$. Then, define $M'$ by inserting in $M$ a weight-$k$ row with a $1$ in the $j$-th column. Now, all columns of $M'$ are nonzero and distinct, so $q_{\mathrm{II}}(n,k) = q_{\mathrm{I}}(n,k) + 1$.

 \begin{comment}
\medskip
\noindent
\textbf{Proof of \cref{lemma_ceil}}:

   Consider the Type-II bound. Setting $m = \ceil{\frac{2n}{k+1}}$, observe that $\Vol_2(1,m-1) =  m \geq \frac{2\binom{k+1}{2}}{k+1} = k$, so $r_{m-1}(k) = 1$. By \cref{lemma_efficiency}, $q_\mathrm{II}(n', k) = m$, where $n' = \Vol_2(1,m)-1 + \floor{\frac{m(k-\Vol_2(0,m-1))}{1+1}} =  m + \floor{\frac{m(k-1)}{2}}$, which can also be written as $n' = m-1 + \ceil{\frac{m(k-1)+1}{2}} = \ceil{\frac{m(k+1)-1}{2}}$. Moreover, $m = \ceil{\frac{2n}{k+1}} \geq \frac{2n}{k+1}$, so $n' \geq \ceil{\frac{\frac{2n}{k+1}(k+1) -1}{2}} = \ceil{n-\frac{1}{2}}=n$. By \cref{lemma_ordering}, $q_\mathrm{II}(n,k) \leq  m$. 

Finally, by \cref{lemma_fixedk}, $q_{\mathrm{II}}(n,k) = m$.  The respective expression for Type-I follows immediately from \cref{lemma_type12_relations} since $n \geq \binom{k+1}{2}\geq2k-1$.
 
\bigskip
\noindent

\end{comment}

\end{document}